\newcommand{\p}{\varphi}
\newcommand{\la}{\lambda}
\renewcommand{\a}{\alpha}
\renewcommand{\b}{\beta}
\newcommand{\D}{\nabla}
\def\d{\partial}
\def\weak{\rightharpoonup}
\def\RR{\mathbb{R}}
\renewcommand{\t}{\widetilde}
\newcommand{\R}{\mathbb{R}}
\def\LL{\mathcal{L}}
\def\AA{\mathcal{A}}
\def\DD{\mathcal{D}}
\def\W{\mathcal{W}}
\def\WW{\mathcal{W}}
\newcommand{\SM}{\mathcal{S}M}
\def\D{\nabla}
\def\a{\alpha}
\def\b{\beta}
\renewcommand{\o}{\overline}
\newcommand{\al}[1]{\left\langle #1 \right\rangle}
\renewcommand{\div}{\mathrm{div\ }}
\newtheorem{theorem}{Theorem}[section]
\newtheorem{lemma}[theorem]{Lemma}
\newtheorem{proposition}[theorem]{Proposition}
\newtheorem{corollary}[theorem]{Corollary}
\newtheorem{definition}[theorem]{Definition}
\newtheorem{remark}[theorem]{Remark}
\DeclareMathOperator{\Tr}{Tr}
\DeclareMathOperator{\cof}{cof}
\DeclareMathOperator{\Def}{Def}
\DeclareMathOperator{\End}{End}
\DeclareMathOperator{\Hess}{Hess}
\begin{document}
\title{Nonlinear bending theories\\
for non Euclidean plates}

\author{
{\sc Peter Hornung}
\footnote{
    Work supported by the DFG.
}
}

\date{}

\maketitle

\begin{abstract}
Thin growing tissues (such as plant leaves) can be modelled by a bounded
domain $S\subset\RR^2$ endowed with a Riemannian
metric $g$, which models the internal strains caused by the
differential growth of the tissue.
The elastic energy is given by a nonlinear isometry-constrained 
bending energy functional
which is a natural generalization of Kirchhoff's plate functional.
We introduce and discuss a natural notion of (possibly non-minimising) stationarity 
points. We show that rotationally symmetric immersions of the unit disk are
stationary, and we give examples of metrics $g$ leading to functionals with
infinitely many stationary points.
\end{abstract}

\section{Introduction}\label{Sec0}

Thin growing tissues in biology, such as plant leaves, display bending patterns even if no external forces are applied.
A similar pattern formation is observed in thin elastic sheets that have undergone plastic deformation, cf. \cite{SharonMarderSwinney}.
What garbage bags and growing tissues have in common is that their reference configuration is not stress-free.
In the case of growing tissues, the internal stresses arise from the differential growth of the leaf: typically,
leaves do not grow much near their midrib, but their cells keep dividing and growing to their target size 
near the edges of the leaf. The excess length created by this
growth near the leaf edge creates internal stresses in the planar reference configuration. These
stresses can be relaxed by bending the leaf out of the plane.
\\
Such questions have been studied in \cite{EfratiSharonKupferman, GemmerVenka} and elsewhere.
Two very different scenarios can arise: either there is a way of fully relaxing the internal stresses by deforming the leaf (with finite
bending energy), or this is not possible. In this paper we are interested in the former case.
\\
The reference configuration of the leaf is modelled by a bounded domain $S\subset\R^2$ and the inhomogeneous and anisotropic
local growth pattern is modelled by a Riemannian `target' metric $g$ on $S$.
The fact that there is a way of fully relaxing the internal stresses by deforming the leaf with finite
bending energy means precisely that the set 
$$
W^{2,2}_g(S) = \left\{u\in W^{2,2}(S, \RR^3) : (\D u)^T(\D u) = g
\mbox{ almost everywhere in }S\right\}
$$
of $W^{2,2}$ isometric immersions of
$(S, g)$ into $\RR^3$ is nonempty. If this is the case, then the asymptotic 
behaviour of the three dimensional elastic energy of the leaf with
small thickness is captured by the natural generalization of Kirchhoff's 
nonlinear bending theory of plates which we introduce next.
Firstly, for any regular immersion $u : S\to\R^3$ we introduce the
Willmore functional (cf. e.g. \cite{Weiner, Schatzle-CV})
\begin{equation}
\label{willm}
W(u) = \frac{1}{4}\int_{S} H^2 d\mu_g + \int_{\d S} \kappa_g d\mu_{g_{\d}},
\end{equation}
where $\kappa_g$ is the geodesic curvature of $\d S$
and $g$ is the metric induced by $u$, the induced area measure on $S$ is $\mu_g$
and the induced boundary measure on $\d S$ is $\mu_{g_{\d}}$.
The Willmore functional has been studied extensively in the literature,
cf. \cite{Willmore-book, Simon-willmore, PinkallSterling} and the references cited therein. Its analytical properties
have been systematically studied in \cite{KS-flow, BauerKuwert, KS-annals, Riviere-willmore}.
More recently, there has been growing interest in constrained versions of the Willmore functional. The typical constraints
include prescribed conformal class (cf. \cite{bohle, KS-conformal, Riviere-L2}) or
fixed area and enclosed volume, cf. \cite{Helfrich, Schygulla}.

The relevant case for thin film elasticity is the restriction of the Willmore functional
\eqref{willm} to {\em isometric} immersions of the given Riemannian manifold $(S, g)$ into $\RR^3$.
More precisely, from now on $S\subset\RR^2$ will denote a bounded simply connected domain with smooth boundary,
and $g : \overline{S}\to\RR^{2\times 2}$ will be a given smooth Riemannian metric on $\overline{S}$. We will study
the restriction of the Willmore functional to the class $W^{2,2}_g(S)$.
That is, we will study the generalized Kirchhoff plate functionals
$$
\widetilde{\W}_g(u) =
\begin{cases}
W(u) &\mbox{ if }u\in W^{2,2}_g(S)
\\
+ \infty &\mbox{ otherwise.}
\end{cases}
$$
In addition to their key role in the modelling of thin films in nonlinear elasticity, 
these functionals are also entirely natural from a geometric viewpoint, 
as they are the simplest {purely extrinsic} functionals on
surfaces. It is a key feature of thin films in nonlinear elasticity that they
undergo large deformations with low energy. In contrast to von-K\'arm\'an theories
(cf., e.g., \cite{H-PRSE}), the above functionals admit
such large deformations. In fact, they arise naturally as (rigorous) asymptotic
theories from fully nonlinear three-dimensional elasticity in a bending energy regime, cf. 
\cite{FJMM-cras, KupfermanSolomon}.

In the case when $g$ is the standard flat metric, i.e. $g_{ij} = \delta_{ij}$,
the corresponding constrained Willmore functional $\widetilde{\W}_g$ agrees with the energy functional
arising Kirchhoff's nonlinear bending theory for thin elastic plates, cf. \cite{fjm1}. It was studied
in \cite{H-CPAM}. However, the arguments used in that paper heavily depend upon
the special structure (developability) of intrinsically flat surfaces, so they do not carry over to other metrics.
Moreover, they are not suited to formulate the concept of `stationary point' for functionals such as $\W_g$.
A possible notion was introduced in \cite{H-CPAM}
by regarding solutions to the Euler-Lagrange equations
derived in that paper as stationary points. 
But this definition does not help in 
defining stationarity for $\widetilde{\W}_g$ when $g$ is Riemannian metric other than the 
standard flat metric.
The need for a notion of (possibly non-minimising) stationary points
was pointed out e.g. in \cite{GemmerVenka-PhysicaD}.
\\
Moreover, from a conceptual viewpoint, a major drawback of the derivation of the Euler-Lagrange 
equation in \cite{H-CPAM} (which
is a system of ordinary differential equations) is that the relation to the 
classical Willmore equation remains unclear. So does the relation to a formal
Lagrange multiplier rule.

In this paper we provide a framework
for the analysis of isometry-constrained functionals such as $\widetilde{\W}_g$, 
which (as far as it goes) works for arbitrary Riemannian metrics $g$. 
This approach overcomes the problems mentioned earlier:
It leads to a natural notion of stationarity, the relation to the
classical Willmore equation is clear, 
and it leads to a natural formulation in terms of Lagrange multipliers.
\\
Our notion of stationarity is based on the geometric concept of bendings. 
Bendings are deformations of a given surface which preserve the metric. The velocity field $\tau$
of such a deformation is a solution of the system
\begin{equation}
\label{IB00}
\d_i u\cdot\d_j\tau + \d_j u\cdot\d_i\tau = 0\mbox{ for all }i = 1, 2,
\end{equation}
which is obtained by linearizing the system
$\d_i u\cdot\d_j u = g_{ij}$. Solutions $\tau$ to \eqref{IB00} are called infinitesimal bendings
of $u$.
\\
Our abstract Euler-Lagrange equations will be formulated in terms of infinitesimal bendings. 
We give various formulations of these Euler-Lagrange equations, one of which shows very clearly the relation 
to the classical Willmore equation (cf. \cite{Weiner}).
In particular, it explains how Euler-Lagrange ordinary differential equations such as those derived in \cite{H-CPAM} 
(or the particular version stated in \cite{StarostinHeijden}) relate to the Willmore equation.
\\
After introducing the
general framework, we study rotationally symmetric surfaces and show
that rotationally symmetric surfaces with finite energy are always stationary.
The simple proof illustrates the use of the stationarity condition introduced
here. (We refer to \cite{H-AnotherRemark} for the analogous result in the context
of von K\'arm\'an theories.)
\\
Moreover, that result allows us to construct a class of smooth metrics $g$ on the unit disk
for which $\W_g$ admits infinitely many distinct stationary points.

\section*{The setting}

Throughout the paper $S\subset\R^2$ is a smoothly bounded simply connected domain,
and $g\in C^{\infty}(\overline{S}, \RR^{2\times 2})$ is a Riemannian metric on $\o S$.
We write $M$ to denote the Riemannian manifold $(S, g)$.
By $\al{\cdot, \cdot}$ resp. $|\cdot|$ we denote the natural scalar product
on bundles inherited from the metric $g$. We set $|g| = \det g$ and $\sqrt{g} = \sqrt{|g|}$.
We use the common convention regarding the raising and lowering of indices. In particular,
$g^{kl}$ denotes the $(kl)$-entry of the matrix $g^{-1}$.
The Christoffel symbols of (the connection associated with) $g$ are denoted by $\Gamma_{ij}^k$, 
that is,
$
\Gamma_{ij}^k := \frac{1}{2} g^{kl}(\d_j g_{il} + \d_i g_{jl} - \d_l g_{ij}).
$
\\
By $TM$ we denote the tangent bundle along $M$; tensor bundles are denoted in the usual way. 
By $\SM$ we denote
the bundle $T^*M\odot T^*M$ of symmetric tensors in $T^*M \otimes T^*M$.
The metric connection is denoted by $D$, and the same letter denotes the natural connection
on tensor bundles. The class of all $L^2$ sections of a bundle $\Gamma$
are denoted by $L^2(\Gamma)$, and similar notation is used for other function spaces.
For $Y\in C^{\infty}(TM)$ we set define $\Def Y = \LL_Y g$ (where $\LL_Y$ is the Lie-derivative),
which is the section of $\SM$ given in coordinates by
$$
(\Def Y)_{\a\b} = \frac{1}{2}\left( (D_{\a}Y)_{\b} + (D_{\b}Y)_{\a}\right).
$$
The standard connection on $\R^3$ is denoted by $\D$. Let $u : M\to\R^3$ be an isometric immersion.
Its normal is denoted by $n$, so
$$
n = \frac{\d_1 u\times\d_2 u}{|\d_1 u\times\d_2 u|}.
$$
Here and in what follows we denote by $\times$ the cross product in $\R^3$.
By $A$ we denote the second fundamental form of $u$;
following common convention its coordinates are denoted by $h_{ij}$,
so
$$
h_{ij} = -\d_i n \cdot\d_j u = n\cdot\d_i\d_j u.
$$
The section $B$ of $\End(TM)$ associated with
a section $b$ of $\SM$ is defined via $\al{BX, Y} = b(X, Y)$, and viceversa.
By $S$ we denote the Weingarten map. It is the negation of the section of $\End(TM)$
associated with $A$. 
\\
By $J$ we denote the natural almost complex structure on $M$, i.e., the section
of $\End(TM)$ determined by the condition that
$$
\D_{JX}u = n\times \D_X u \mbox{ for all }X\in C^{\infty}(TM).
$$
In coordinates we have
$$
J_{\a\b} =
\begin{cases}
-\sqrt{g} &\mbox{ if }(i, j) = (1, 2)
\\
\sqrt{g} &\mbox{ if }(i, j) = (2, 1)
\\
0 &\mbox{ otherwise.}
\end{cases}
$$
We define an associated section $J$ of $\End(\SM)$, also denoted $J$, by setting
$$
(Jq)(X,Y) := q(JX, JY)
$$
for every section $q$ of $\SM$.
Here and in what follows, unless specified otherwise, 
$X$ and $Y$ always denote smooth tangent vector fields. 
We observe that, in coordinates,
$$
(Jq)^{\a\b} = \frac{1}{|g|}\left( \cof q \right)_{\a\b}.
$$
As usual, for sections $b$ of $\SM$ we define the $1$-form $\div b$ by
$$
(\div b)(X) = (D_{\a} b)(\d^{\a}, X).
$$
Here and elsewhere we use the summation convention.
\\
A Codazzi tensor is a section $b$ of $\SM$ satisfying the Codazzi-Mainardi equations, i.e.,
$$
\div (Jb) = 0.
$$
The formal adjoint of $\Def$ is $-\div$. So $b\in L^1_{loc}(\SM)$ is (weakly) Codazzi
if
$$
\int_M \al{Jb, \Def Y} = 0 \mbox{ for all }Y\in C^{\infty}_0(TM).
$$
For a given immersion $u : M\to\R^3$ and for displacements
$\tau$, $\rho : M\to\R^3$ we define the section
$d\tau\cdot d\rho$ of $T^*M\otimes T^* M$ by
$$
(d\tau\cdot d\rho)(X, Y) = \D_X\tau\cdot\D_Y\rho.
$$
On the right-hand side a simple dot denotes the standard scalar product in the ambient
space $\R^3$, so $\D_X\tau\cdot\D_Y\rho = \sum_{i = 1}^3 (\D_X\tau_i)(\D_X\rho_i)$.
\\
The expression $d\tau\times d\rho$ is defined similarly.
\\
By $d\tau\odot d\rho = d\rho\odot d\tau$ we denote the symmetrisation of $d\tau\cdot d\rho$, 
i.e., the section of $\SM$ given by
$$
(d\rho\odot d\tau) = \frac{1}{2}\left( d\tau\cdot d\rho + d\rho\cdot d\tau \right).
$$
A displacement $\tau$ is called an infinitesimal bending of an immersion $u : M\to\R^3$ 
provided that $du\odot d\tau = 0$.
\\
The Hessian of a scalar function $f$ on $M$ is denoted by $\Hess f$. By definition, 
it is the section of $\End(TM)$ (resp. of $\SM$) given by 
$$
\Hess f = Ddf.
$$
Here and elsewhere we
abuse notation by writing $df$ to denote both the gradient vector field of $f$
and its associated one-form. We write $u_* V = \D_V u$ to denote the usual push-forward.
\\
If $F$ and $G$ are matrices of the same order, we will write
$$
F : G = \Tr (F^T G).
$$

\section{Minimisers of $\W_g$}\label{Existence}

As compared to the un-constrained
Willmore functional, the isometry constraint simplifies matters drastically 
when it comes to proving
existence of minimizers. This is mainly because it breaks the invariance under
diffeomorphisms. Therefore, proving existence of minimizers of $\widetilde{\W}_g$ 
will turn out to be straightforward. 
In fact, we will see that the stationary points of $\widetilde{\W}_g$ agree with those of the functional
$\W_g$ defined by
\begin{equation}
\label{D-2}
{\W}_g(u) =
\begin{cases}
\frac{1}{2}\int_M |A|^2 &\mbox{ if }u\in W_g^{2,2}( S, \RR^3)
\\
+ \infty &\mbox{ otherwise.}
\end{cases}
\end{equation}
For all $u\in W^{2,2}_g(S)$ we clearly have
\begin{equation}
\label{hess-2}
|A|^2 = 4H^2 - 2K,
\end{equation}
where $K$ is the Gauss curvature of the metric $g$.
As $K$ is an intrinsic quantity
it is the same for all $u\in W^{2,2}_g(S)$. This is also true for
the boundary integral $\int_{\d S} \kappa_g$.
Hence the stationary points of $\widetilde{\W}_g$ agree with those of the functional
$\W_g$ defined by \eqref{D-2}.
\\
Writing $A^{0} = A - Hg$ for the trace-free part of $A$ and using \eqref{hess-2}, we see
that, up to a constant prefactor and an additive constant, $\W_g$ agrees with the functional
\begin{equation}
\label{wg0} 
{\W^{(0)}}_g(u) =
\begin{cases}
\frac{1}{2}\int_M |A^{0}|^2 &\mbox{ if }u\in W_g^{2,2}( S, \RR^3)
\\
+ \infty &\mbox{ otherwise.}
\end{cases}
\end{equation}
\\

{\bf Remark.} The functional $\W_g$ is coercive with respect to the $W^{2,2}$ seminorm.
\\

This follows from the equality
$
|\D_X\D_Y u|^2 = |A(X, Y)|^2 + |D_X Y|^2
$
and the fact the the second term is intrinsic and therefore the same for all
$u\in W^{2,2}_g(S)$.

In view of this remark, the existence of minimizers under typical boundary conditions
follows at once from the direct method of the calculus of variations.
\begin{proposition}
\label{existence}
The restriction of the functional $\W_g$ to
the space
$$
\AA_0 = \{u\in W_g^{2,2}(S) : \int_M u = 0\}
$$
attains a global minimum on this space.
\\
If $\Lambda\subset\d M$ has positive length and $u_0\in W_g^{2,2}(S)$
is given then similar statements are true with $\AA_0$ replaced by
$$
\AA_{\Lambda, u_0}  = \{u\in W^{2,2}_g(S) : (u, du) = (u_0, du_0)
\mbox{ on }\Lambda\}
$$
or by
$$
\tilde{\AA}_{\Lambda, u_0}  = \{u\in W^{2,2}_g(S) : u = u_0
\mbox{ on }\Lambda\}.
$$
\end{proposition}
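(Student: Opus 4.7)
The plan is to apply the direct method of the calculus of variations, uniformly to all three admissible classes. Take a minimizing sequence $\{u_n\}$ in the relevant class. The isometry constraint $(\D u_n)^T(\D u_n) = g$ gives a pointwise $L^\infty$ bound on $\D u_n$, so combined with either the mean-zero condition on $\AA_0$ or the Dirichlet trace on the positive-length set $\Lambda$ for $\AA_{\Lambda, u_0}$ and $\tilde\AA_{\Lambda, u_0}$, a Poincar\'e inequality yields uniform control of $u_n$ in $L^2$. The remark preceding the proposition turns finiteness of $\W_g(u_n)$ into a uniform $L^2$ bound on $\D^2 u_n$: the pointwise identity $|\D_X\D_Y u|^2 = |A(X,Y)|^2 + |D_XY|^2$ integrates, over an orthonormal frame, to $\int_M |\D^2 u|^2\,d\mu_g = 2\,\W_g(u) + C_g$, with $C_g$ depending only on $g$. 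Hence $\{u_n\}$ is bounded in $W^{2,2}(S,\RR^3)$.

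Extract a subsequence (not relabeled) with $u_n \weak u^*$ in $W^{2,2}$. By Rellich--Kondrachov, $u_n \to u^*$ strongly in $W^{1,p}(S,\RR^3)$ for every finite $p$, and along a further subsequence $\D u_n \to \D u^*$ almost everywhere. Taking the pointwise limit in $(\D u_n)^T(\D u_n) = g$ yields $(\D u^*)^T(\D u^*) = g$ a.e., so $u^*\in W^{2,2}_g(S)$. The linear functional $u\mapsto\int_M u$ is continuous under weak convergence, so $u^*\in\AA_0$ in the first case. For the Dirichlet classes, continuity of the trace operators $W^{2,2}(S)\to W^{3/2,2}(\d S)$ and $W^{1,2}(S)\to W^{1/2,2}(\d S)$ under weak convergence ensures that $u^*$ and, where required, $\D u^*$ agree with $u_0$ and $\D u_0$ on $\Lambda$, so $u^*$ lies in the prescribed class.

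For weak lower semicontinuity, the identity above reduces the minimization of $\W_g$ on $W^{2,2}_g(S)$ to the minimization of the convex continuous quadratic functional $u\mapsto\int_M |\D^2 u|^2\,d\mu_g$ on $W^{2,2}$, which is weakly lower semicontinuous by general principles. Therefore $\W_g(u^*)\leq\liminf_n \W_g(u_n)$, and $u^*$ attains the infimum in its admissible class.

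The main obstacle is of course the nonlinear isometry constraint, which is not obviously preserved under weak $W^{2,2}$ convergence. The point to emphasize is that the constraint is resolved essentially for free: the constraint itself bounds $\D u_n$ in $L^\infty$, and combined with the compact embedding $W^{2,2}\embed W^{1,p}$ this upgrades the weak convergence of $\D u_n$ to a.e. convergence, allowing the nonlinear constraint to pass to the limit pointwise. No rigidity theorem is required here; the constraint involves only first derivatives and hence lives one degree of regularity below the energy.
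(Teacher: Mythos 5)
Your proof is correct and follows essentially the same route as the paper: coercivity in the $W^{2,2}$ seminorm from the preceding remark, $L^\infty$ control of $\D u$ from the isometry constraint, Poincar\'e from the normalization or boundary data, weak $W^{2,2}$ compactness, passing the constraint and side conditions to the limit, and lower semicontinuity. You simply spell out the details (Rellich--Kondrachov upgrading to a.e.\ convergence of $\D u_n$, trace continuity) that the paper's terse proof leaves implicit.
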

\begin{proof}
Sequences in $\AA_0$ with bounded $\W_g$-energy subconverge weakly
in $W^{2,2}$: The second derivatives are controlled because $\W_g$ is coercive with respect to the $W^{2,2}$ seminorm,
the first derivatives are uniformly bounded by the isometry constraint, and by
the normalization (resp. boundary conditions)
together with Poincar\'e's
inequality, $u$ itself is controlled as well.
\\
Moreover, $\W_g$ is obviously lower semicontinuous with
respect to weak $W^{2,2}$-convergence. Finally, notice that the isometry constraint
and the subsidiary conditions are continuous with respect to this convergence.
\end{proof}

\section{Infinitesimal bendings}\label{IBs}

We begin with a definition of infinitesimal bendings which makes sense under minimal regularity assumptions:

\begin{definition}
\label{def-IB} 
A vector field $\tau\in L^2(S, \RR^3)$ is called an {\em infinitesimal bending} of
an immersion $u\in W^{2,2}(S, \RR^3)$ if it satisfies
\begin{equation}
\label{ib-00}
\d_i (\tau\cdot\d_j u) + \d_j (\tau\cdot\d_i u) = 2\tau\cdot\d_i\d_j u \mbox{ in $\DD'(S)$ for }i, j = 1, 2.
\end{equation}
An infinitesimal bending is said to be trivial if it is the velocity field of a rigid motion.
\end{definition}

Observe that if $\tau\in W^{1,2}(S, \RR^3)$ then by the Leibniz rule \eqref{ib-00} is equivalent to
$du\odot d\tau = 0$. In fact, we will mainly encounter infinitesimal bendings of $u$ which belong to the space $W^{2,2}$, so
we will mostly use this condition as a definition of infinitesimal bending.
\\
For a given infinitesimal bending $\tau\in W^{1,2}(M, \RR^3)$ of $u$,
as in \cite{H-Velcic} we define its linearised second fundamental form $b$ by
$b = n\cdot\Hess \tau$.
In coordinates, this reads
$$
b_{ij} = n\cdot (\d_i\d_j\tau - \Gamma^k_{ij}\d_k\tau).
$$
Observe that $b$ as defined here is well-defined as a distribution, because $n\in W^{1,2}$
and $\Hess\tau\in H^{-1}$. The infinitesimal bendings $\tau$ considered in this paper
will mostly belong at least to $W^{2,1}$, 
in which case the definition of $b$ makes sense pointwise almost everywhere 
and $b\in L^1(\SM)$.
\\
There are various ways of representing infinitesimal bendings. Here we will
use three of them. Among geometers, these are well-known in the 
smooth setting, cf. e.g. \cite{Vekua}.
Our approach here differs from the usual one, as we need to display the 
links between these representations. For later use, we formulate these facts
in Sobolev spaces and derive formulae which allow to switch representations
with minimal loss of integrability.

\subsection{Representation via the displacement field}\label{Displacement}

Decomposing the vector field $\tau : S\to\RR^3$ into its tangential part and its
normal part,
\begin{equation}
\label{vf}
\tau = \D_{V}u + \Phi n,
\end{equation}
we have
$
du\odot d\tau = \Def V - \Phi A.
$
Hence $\tau$ of the form \eqref{vf} is an infinitesimal bending of $u$
precisely if 
\begin{equation}
\label{ib}
\Def V = \Phi A.
\end{equation}

As an immediate consequence of these observations and of Definition \ref{def-IB}, 
we note the following result:
\begin{lemma}
\label{vple2} 
Let $\tau\in L^2(M, \RR^3)$, set $\Phi = \tau\cdot n$ and define $V$ by 
$\D_V u = \tau - \Phi n$. Then 
$\tau$ is an infinitesimal bending of $u$ if and only if
$V$ and $\Phi$ satisfy $\Def V = \Phi A$ in the distributional sense.
\end{lemma}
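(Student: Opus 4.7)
The lemma is the distributional version of the pointwise identity $du\odot d\tau = \Def V - \Phi A$ observed just before its statement, so the plan is threefold: first to check that $V$ and $\Phi$ are well-defined from $\tau$ under our regularity, then to translate Definition \ref{def-IB} into the distributional statement $du\odot d\tau = 0$, and finally to verify that this latter identity is equivalent, as elements of $\DD'(S)$, to $\Def V = \Phi A$.

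For the first step I would note that $|n| = 1$ gives $\Phi = \tau\cdot n \in L^2(S)$ at once, and then $\tau - \Phi n \in L^2(S, \RR^3)$ is pointwise almost everywhere orthogonal to $n$, hence tangential along $u$ and uniquely of the form $\D_V u$ for some $V\in L^2(TM)$. For the remaining steps I would take a test function $\varphi \in C^{\infty}_c(S)$ and unfold \eqref{ib-00}: shifting one derivative off $\tau$ onto $\varphi$ (permitted because $\tau\in L^2$ and $\d_j u\in W^{1,2}$), the identity becomes a bilinear pairing in $(\tau,\varphi)$ with smooth $u$-coefficients. Substituting $\tau = \D_V u + \Phi n$ and using the Gauss formula $\D_Y \D_V u = \D_{D_Y V} u + A(Y, V)\, n$, together with $\D_X u \cdot n = 0$, the sign identity $\d_i n \cdot \d_j u = -h_{ij}$, and the coordinate expression for the Christoffel symbols, the expression collapses to the pairing of $\varphi$ against $2((\Def V)_{ij} - \Phi\, h_{ij})$. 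Arbitrariness of $\varphi$ and of $i,j$ yields the equivalence in both directions.

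The main technical hurdle is making sure every manipulation is valid at the $L^2$-level: one must check that each term in the algebraic rearrangement still makes sense in $\DD'(S)$ after the substitution, in particular that the cross terms involving $(\d_i \Phi)\, n \cdot \d_j u$ vanish distributionally (which they do, since $n \cdot \d_j u = 0$ as smooth functions and $\d_i \Phi$ enters only multiplied by this vanishing factor), and that the tangential part of $\D_Y \D_V u$, which is intrinsic, contributes precisely $\Def V$ after symmetrisation rather than some lower order error. A clean route to finesse the regularity is to mollify $\tau$ to $\tau_\varepsilon \in C^{\infty}$, carry out the identities pointwise at the smooth level, and pass to the limit in $\DD'(S)$ using continuity of each bilinear pairing in the $L^2$ topology of $\tau$.
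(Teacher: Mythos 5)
Your argument is correct and is essentially the one the paper has in mind: decompose $\tau$ into tangential and normal parts and identify $du\odot d\tau$ with $\Def V - \Phi A$, a computation the paper records just before the lemma and then declares the statement to be an ``immediate consequence'' of. The mollification you invoke to handle the low regularity of $\tau$ is, however, unnecessary: since $n\cdot\d_j u = 0$, the covariant components $V_j = g_{jk}V^k$ equal $\tau\cdot\d_j u$ --- exactly the quantities being differentiated in \eqref{ib-00} --- so after pairing the pointwise a.e.\ Gauss decomposition $\d_i\d_j u = \Gamma^k_{ij}\d_k u + h_{ij}n$ with $\tau\in L^2$, one finds that \eqref{ib-00} is, term by term, identical as a distributional identity to $\tfrac12(\d_i V_j+\d_j V_i)-\Gamma^k_{ij}V_k = \Phi h_{ij}$, i.e.\ to $\Def V=\Phi A$; no separate differentiation of $V$ or $\Phi$, no regularisation, and in particular no intermediate use of $\D_Y\D_V u$ (which you cannot form for merely $L^2$ fields $V$) is ever needed.
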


Whenever $f\in W^{1,2}$ and $\mu\in H^{-1}$
then we interpret $f\mu$ as the distribution acting on test functions $\p$
via $(f\mu)(\p) = \mu(\p f)$.
The proof of the following two lemmas is straightforward.

\begin{lemma}\label{lele-1} 
If $\rho$, $f\in W^{1,2}(M)$ then, as distributions,
$$
\Hess (f\rho) = f\Hess\rho + \rho\Hess f + 2d\rho\odot df.
$$
\end{lemma}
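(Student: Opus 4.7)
The plan is to prove the identity first for smooth $f,\rho$ by a pointwise computation, and then extend to $W^{1,2}$ by density together with the distributional-product convention recalled just above. For smooth $f,\rho$, combining the product rule $d(f\rho)=fd\rho+\rho df$ with the Leibniz rule $D(h\alpha)=dh\otimes\alpha+h\,D\alpha$ (for the connection $D$ acting on a scalar times a one-form) gives
\[
\Hess(f\rho)=Dd(f\rho)=f\Hess\rho+\rho\Hess f+df\otimes d\rho+d\rho\otimes df.
\]
Since $\Hess(f\rho)$ is symmetric (torsion-freeness of $D$) and $\alpha\otimes\beta+\beta\otimes\alpha=2\,(\alpha\odot\beta)$ by definition of the symmetric product, this is the claimed identity in the smooth case.

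To pass to the $W^{1,2}$ setting I approximate $f$ and $\rho$ by $f_n,\rho_n\in C^\infty(\overline{S})$ converging strongly in $W^{1,2}(M)$; such approximants exist by standard mollification in a neighbourhood of $\overline S$. The smooth identity holds for each pair $(f_n,\rho_n)$, and I pass to the limit in $\DD'$ in each of the four terms. The left-hand side behaves well because $d(f_n\rho_n)=f_n d\rho_n+\rho_n df_n\to fd\rho+\rho df=d(f\rho)$ in $L^1_{loc}$ (H\"older), hence $\Hess(f_n\rho_n)\to \Hess(f\rho)$ distributionally; the cross term $df_n\odot d\rho_n\to df\odot d\rho$ converges in $L^1_{loc}$ for the same reason.

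The delicate terms are $f_n\Hess\rho_n$ and $\rho_n\Hess f_n$, which in the limit only make sense via the convention $(f\mu)(\p)=\mu(\p f)$ stated above. Testing against a smooth compactly supported section $\p$ of the dual of $\SM$ I write
\[
(f_n\Hess\rho_n)(\p)=(\Hess\rho_n)(f_n\p).
\]
Since $f_n\to f$ strongly in $W^{1,2}$ and $\p\in C^\infty_c$, the product $f_n\p\to f\p$ converges strongly in $W^{1,2}_0$; and since $d\rho_n\to d\rho$ in $L^2$, one has $\Hess\rho_n\to\Hess\rho$ in $H^{-1}_{loc}$. Therefore the pairing converges to $(\Hess\rho)(f\p)=(f\Hess\rho)(\p)$, and the symmetric argument handles $\rho_n\Hess f_n\to\rho\Hess f$. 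Summing the four limits yields the identity in $\DD'$.

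The only genuine (and mild) obstacle is this final step: because the product $W^{1,2}\cdot H^{-1}$ is not bilinear-continuous in the pair of weak topologies, one really needs \emph{strong} $W^{1,2}$-convergence of the approximants to pair against the $H^{-1}_{loc}$-convergent Hessians. This is why the proof relies on smooth approximation rather than, say, truncation or a direct duality manipulation.
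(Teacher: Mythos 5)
Your proof is correct. The paper gives no argument for this lemma (it only remarks that the proof is ``straightforward''), and what you wrote is precisely the natural way to make that remark precise: the pointwise Leibniz computation $\Hess(f\rho)=f\Hess\rho+\rho\Hess f+df\otimes d\rho+d\rho\otimes df$ for smooth data, symmetrisation via $df\otimes d\rho+d\rho\otimes df=2\,df\odot d\rho$, and a density argument in which the two genuinely distributional products $f\Hess\rho$ and $\rho\Hess f$ are handled through the paper's pairing convention $(f\mu)(\p)=\mu(\p f)$. The key observation you isolate — that strong $W^{1,2}$-convergence of the mollified approximants is what lets one pair $f_n\p\to f\p$ in $W^{1,2}_0$ against $\Hess\rho_n\to\Hess\rho$ in $H^{-1}_{loc}$ — is exactly the point that makes the $W^{1,2}\times W^{1,2}$ hypothesis sufficient, and the remaining limits $(d(f_n\rho_n)\to d(f\rho)$ and $df_n\odot d\rho_n\to df\odot d\rho$ in $L^1_{loc})$ follow from Cauchy--Schwarz as you say. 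No gap.
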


\begin{lemma}\label{lele-2} 
If $u\in W^{2,2}(M, \R^3)$ then, as distributions,
$$
(\Hess n)(X, Y) = u_*\left( (D_X S)(Y) \right) - \al{SX, SY}n.
$$
\end{lemma}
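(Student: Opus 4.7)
The plan is direct: derive the formula in the smooth setting by two applications of the Gauss decomposition, then check each manipulation remains valid distributionally under $W^{2,2}$ regularity.

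First I would establish the auxiliary identity $\d_i n = u_*(S\d_i)$ (equivalently, $Yn = u_*(SY)$ for any smooth tangent field $Y$). This follows from differentiating $|n|^2 = 1$ and $n\cdot\d_j u = 0$ almost everywhere: the first forces $\d_i n$ to be tangential, and the second gives $\d_i n\cdot\d_j u = -h_{ij}$, which matches $u_*(S\d_i)\cdot\d_j u = \al{S\d_i, \d_j} = -h_{ij}$ under the paper's convention $\al{SX,Y} = -A(X,Y)$ coming from $S = -B$.

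Next I would compute $\Hess n$ componentwise from the defining formula $(\Hess n)(X, Y) = X(Yn) - (D_X Y)n$, using $Yn = u_*(SY)$. The Gauss formula $\d_X(u_*Z) = u_*(D_X Z) + A(X, Z)\, n$ (i.e.\ the pointwise-a.e.\ decomposition $\d_i\d_k u = \Gamma_{ik}^l \d_l u + h_{ik}\, n$, valid for $u\in W^{2,2}_g$) applied to $Z = SY$ yields $X(u_*(SY)) = u_*(D_X(SY)) + A(X, SY)\, n$. Combined with $(D_X Y)n = u_*(S(D_X Y))$ and the defining relation $(D_X S)(Y) = D_X(SY) - S(D_X Y)$, this gives
$$(\Hess n)(X, Y) = u_*\bigl((D_X S)(Y)\bigr) + A(X, SY)\, n.$$
The sign identity $A(X, SY) = -\al{SX, SY}$ (once more from $S = -B$) then produces the stated normal component.

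The main obstacle is justifying these manipulations distributionally, since the components of $S$ lie only in $L^2$. This is handled precisely by the product convention introduced just before Lemma \ref{lele-1}, which interprets $f\mu$ for $f\in W^{1,2}$ and $\mu\in H^{-1}$: under this convention the distributional Leibniz expansion $\d_i(u_*(SY)) = \d_i((SY)^k)\,\d_k u + (SY)^k\,\d_i\d_k u$ is legitimate, the remaining products appearing in the calculation are at worst $L^1$ (using $n\in W^{1,2}\cap L^\infty$ and $|S|^2\in L^1$), and the smooth computation above transcribes verbatim into the claimed distributional identity.
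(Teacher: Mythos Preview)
Your argument is correct and is precisely the natural computation the paper has in mind; the paper itself gives no proof beyond declaring Lemmas \ref{lele-1} and \ref{lele-2} ``straightforward''. Your sign bookkeeping matches the paper's convention $\al{SX,Y}=-A(X,Y)$, and your distributional justification via the $W^{1,2}\cdot H^{-1}$ product convention is exactly the mechanism the paper sets up immediately before these lemmas.
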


For a vector field $V$ we regard $DV$ as a section of $\End(TM)$.
For two sections $\Psi$ and $\Xi$ of $\End(TM)$ we define a section $\Psi\cdot\Xi$
of $T^*M\otimes T^*M$ by
$$
(\Psi\cdot\Xi)(X, Y) = \al{\Psi(X), \Xi(Y)}.
$$
By $\Psi\odot\Xi$ we denote the symmetrisation of this.

\begin{lemma}\label{lele3}
Let $u\in W^{2,2}(M, \R^3)$ and $\Phi\in W^{1,2}(M)$ and $V\in W^{1,2}(TM)$. Then,
as distributions,
\begin{align}
\label{lele3-1} 
n\cdot\Hess \D_V u &= D_V A - 2 S\odot DV
\\
\label{lele3-2} 
n\cdot \Hess (\Phi n) &= \Hess\Phi - \Phi S\odot S.
\end{align}
In particular, the map $\tau = u_*V + \Phi n$ satisfies
\begin{align}
\label{dtauVP}
d\tau &= u_*(DV + \Phi S) + (d\Phi - SV)\otimes n
\\
\label{hesstauVP} 
n\cdot\Hess\tau &= 
D_V A - 2 S\odot DV + \Hess\Phi - \Phi S\odot S.
\end{align}
\end{lemma}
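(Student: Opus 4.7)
The plan is to verify all four identities by a direct pointwise computation in the smooth case, then extend to the stated Sobolev regularity by mollification. The whole calculation rests on two basic ingredients: the Gauss formula $\D_Y(u_*W) = u_*(D_Y W) + A(Y,W)\,n$ valid for any smooth tangent field $W$, and the Weingarten relation $\D_X n = u_*(SX)$. The sign in the latter is fixed by the paper's convention $A(X, Z) = -\al{SX, Z}$ (since $S$ is the negation of the endomorphism associated with $A$).

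For \eqref{lele3-1}, set $\tau_1 = u_*V$. Differentiating $\D_Y\tau_1 = u_*(D_Y V) + A(Y,V)\,n$ a second time via Gauss--Weingarten, subtracting $\D_{D_X Y}\tau_1$, and taking the $n$-component yields
$$
(n\cdot\Hess\tau_1)(X, Y) = X(A(Y,V)) - A(D_X Y, V) + A(X, D_Y V),
$$
which by the definition of $D_X A$ rewrites as $(D_X A)(Y,V) + A(Y, D_X V) + A(X, D_Y V)$. Symmetry of $A$ combined with the Codazzi--Mainardi identity gives $(D_X A)(Y, V) = (D_X A)(V, Y) = (D_V A)(X, Y)$, while the two cross terms expand via $A(X,Z) = -\al{SX, Z}$ into $-2(S\odot DV)(X, Y)$. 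For \eqref{lele3-2}, set $\tau_2 = \Phi n$; then $\D_Y\tau_2 = (Y\Phi)n + \Phi u_*(SY)$, and a second application of Gauss--Weingarten shows that the tangent-tangent contributions are annihilated by dotting with $n$, leaving $X(Y\Phi) - (D_X Y)\Phi + \Phi A(X, SY) = \Hess\Phi(X,Y) - \Phi\al{SX, SY}$; since $\al{SX, SY}$ is automatically symmetric in $X, Y$ it coincides with $(S\odot S)(X, Y)$. Finally, \eqref{dtauVP} is read off from $\D_Y\tau = u_*(D_Y V + \Phi SY) + (Y\Phi + A(Y,V))n$ on noting $A(Y,V) = -\al{SV, Y}$, and \eqref{hesstauVP} is the sum of \eqref{lele3-1} and \eqref{lele3-2} by linearity of $\Hess$.

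The remaining issue is passing from the smooth identities to the stated Sobolev regularity. Each product in \eqref{lele3-1}--\eqref{hesstauVP} is meaningful distributionally for $u\in W^{2,2}$, $V, \Phi\in W^{1,2}$: as the paper already notes for $n\cdot\Hess\tau$ (via $n\in W^{1,2}$ paired with $\Hess\tau\in H^{-1}$), the same type of pairing handles $D_V A$, $S\odot DV$, $\Hess\Phi$, and $\Phi S\odot S$. Mollifying $u, V, \Phi$ to smooth approximants converging in $W^{2,2}, W^{1,2}, W^{1,2}$ respectively and passing to the distributional limit on each side of the identities yields the claim. The main obstacle is precisely this continuity bookkeeping rather than any fresh idea beyond the smooth Gauss--Weingarten--Codazzi computation.
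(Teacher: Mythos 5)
Your pointwise calculation is correct and, once unpacked, agrees with the paper's: both derivations rest on the Gauss--Weingarten formulas together with the Codazzi--Mainardi symmetry $(D_X A)(V,Y)=(D_V A)(X,Y)$, and the sign choices you make ($\D_X n = u_*(SX)$, $A(X,Z) = -\al{SX,Z}$) match the paper's convention that $S$ is the \emph{negation} of the endomorphism associated with $A$. The algebra in \eqref{lele3-1}, \eqref{lele3-2}, \eqref{dtauVP}, and \eqref{hesstauVP} all checks out.

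Where you diverge from the paper is in how the distributional statement is obtained, and this is the part I would push back on. The paper never mollifies. Instead it applies the Leibniz rule of Lemma~\ref{lele-1} to the identically vanishing quantity $n\cdot\D_V u$, so that $n\cdot\Hess\D_V u$ is expressed through $\D_V u\cdot\Hess n$ and $dn\odot d(\D_V u)$, and then it substitutes the precomputed formula for $\Hess n$ from Lemma~\ref{lele-2}; the products that arise are interpreted through the explicit $W^{1,2}$--$H^{-1}$ pairing $(f\mu)(\p)=\mu(\p f)$ introduced just before Lemma~\ref{lele-1}. This gives a direct distributional proof with no approximation argument. Your plan to mollify $u$, $V$, $\Phi$ and pass to the limit is not mere ``continuity bookkeeping'': the quantities $n$, $A$, $S$, and the Levi-Civita connection $D$ all depend nonlinearly on $u$, and a mollification of $u$ generically leaves the class $W^{2,2}_g(S)$, so the Gauss--Weingarten identities for $u_\e$ hold relative to the $\e$-dependent induced metric $g_\e = (\nabla u_\e)^T\nabla u_\e$, its connection $D^\e$, and its shape operator $S^\e$, not the fixed ones. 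One must then prove convergence of $D^\e$, $A^\e$, $S^\e$ in the right topologies and justify the limit of each term; in two dimensions the borderline term $\Phi\,(S\odot S)$ with $\Phi\in W^{1,2}$, $S\odot S\in L^1$ is particularly delicate because $W^{1,2}$ does not embed into $L^\infty$. (Alternatively, staying inside $W^{2,2}_g(S)$ would require density of smooth isometric immersions, which is unavailable for general $g$.) So the smooth computation you give is the right identity, but you should replace the mollification sketch with either the paper's Leibniz/pairing argument or a detailed justification of the nonlinear limits; as written, that step is a genuine gap rather than a formality.
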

\begin{proof}
Applying Lemma \ref{lele-1} to $0 = \Hess(n\cdot\D_V u)$, we have
\begin{align*}
0 &= \D_V u\cdot \Hess n + n\cdot \Hess\D_V u + 2 dn\odot d( \D_V u ).
\end{align*}
Since by Lemma \ref{lele-2}
\begin{align*}
(\D_V u\cdot\Hess n)(X, Y) &= \al{V,  (D_XS)(Y)}
\\
&= X\left( \al{V, SY} \right) - \al{D_X V, SY} + A(D_X Y, V)
\\
&= -(D_X A)(V, Y) = -(D_V A)(X, Y).
\end{align*}
In the last step we used that $A$ is Codazzi. Since clearly 
$dn\odot d(\D_V u) = S\odot DV$,
equation \eqref{lele3-1} follows.
\\
To prove \eqref{lele3-2} apply Lemma \ref{lele-1} to find
$$
n\cdot\Hess(\Phi n) = \Hess\Phi + \Phi n\cdot\Hess n,
$$
because $n\cdot (d\Phi\odot dn) = 0$. Hence \eqref{lele3-2} follows from Lemma \ref{lele-2}.
\end{proof}

\subsection{Representation via the bending field}\label{Rotation}

The field $\Omega$
in the following lemma is called the bending field of $\tau$.

\begin{lemma}
\label{claim1}
Let $\tau\in W^{1,2}(M, \RR^3)$ be an infinitesimal bending of $u$. Then there exists 
a unique $\Omega\in L^2(M, \RR^3)$ such that
\begin{align}
\label{com-0}
d\tau &= \Omega\times d u \mbox{ almost everywhere.}
\end{align}
It is given by
$\Omega = \D_{\omega} u + \psi n,$
where $\omega\in L^2(TM)$, $\psi\in L^2(M)$ are defined by
\begin{align}
\label{le5-1}
J\omega &= n\cdot d\tau
\\
\label{le5-2}
\psi\ \al{X, Y} &= \D_X \tau\cdot\D_{JY} u.
\end{align}
Moreover, $\Omega$ satisfies $d(\Omega\times du) = 0$
in the distributional sense.
\\
Conversely, if $\Omega\in L^2(M, \RR^3)$ satisfies $d(\Omega\times du) = 0$,
then \eqref{com-0} admits a solution $\tau\in W^{1,2}(M, \RR^3)$,
and $\tau$ is an infinitesimal bending of $u$.
\end{lemma}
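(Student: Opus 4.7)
The plan is to construct $\Omega$ by decomposing $\D\tau$ into tangential and normal components relative to the immersed surface, exploiting the antisymmetry that the infinitesimal bending condition forces. First I would observe that $du\odot d\tau=0$ is equivalent to the bilinear form $(X,Y)\mapsto \D_X\tau\cdot\D_Y u$ being antisymmetric. Since $\dim M=2$, every antisymmetric $(0,2)$-tensor is a scalar multiple of the area form $\al{J\cdot,\cdot}$, so there exists $\psi\in L^{2}$ with
\[
\D_X\tau\cdot\D_Y u \;=\; \psi\,\al{JX,Y}.
\]
Substituting $Y\mapsto JY$ and using $J^{T}=-J$, $J^{2}=-\mathrm{id}$ yields exactly \eqref{le5-2}. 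I would then define $\omega\in L^{2}(TM)$ by declaring $J\omega = n\cdot d\tau$ (using the pointwise invertibility of $J$), and set $\Omega:=\D_\omega u+\psi n$.

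To verify $d\tau=\Omega\times du$ it suffices to match tangential and normal parts pointwise. The tangential component of $\D_X\tau$ is determined by its inner products with $\D_Y u$, which equal $\psi\al{JX,Y}=\al{u_{*}(\psi JX),u_{*}Y}$, so the tangential component is $u_{*}(\psi JX)$. On the right-hand side, $\psi\, n\times u_{*}X=\psi\, u_{*}(JX)$ by the defining identity of $J$. For the normal parts, $(u_{*}\omega)\times u_{*}X$ is normal to $u(M)$, and a direct computation in an orthonormal tangent frame shows $(u_{*}\omega\times u_{*}X)\cdot n=\al{J\omega,X}=(n\cdot d\tau)(X)$, which matches the normal component $(n\cdot\D_X\tau)\,n$ of $\D_X\tau$. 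Uniqueness is immediate: if $\Omega'\times\D_X u=0$ for every $X$, then $\Omega'$ is parallel to each of the linearly independent vectors $\d_1 u,\d_2 u$, hence $\Omega'=0$ almost everywhere.

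The closedness $d(\Omega\times du)=0$ in $\DD'(S)$ is then automatic from $\Omega\times du=d\tau$ and $d\circ d=0$. For the converse direction, given $\Omega\in L^{2}$ with $d(\Omega\times du)=0$ distributionally, I would invoke the Poincar\'e lemma for distributionally closed $L^{2}$ one-forms on the simply connected smooth domain $S$ to produce $\tau\in W^{1,2}(S,\RR^{3})$ with $d\tau=\Omega\times du$. That this $\tau$ is an infinitesimal bending then follows from the triple-product identity $(\Omega\times\D_X u)\cdot\D_Y u=\det(\Omega,\D_X u,\D_Y u)$, which is antisymmetric under $X\leftrightarrow Y$, so that $\D_X\tau\cdot\D_Y u+\D_Y\tau\cdot\D_X u=0$, i.e.\ $du\odot d\tau=0$.

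The main obstacle I anticipate is the component-matching in the middle step: although it is ``just'' linear algebra, one must carefully combine the defining identity $n\times u_{*}X=u_{*}(JX)$ with the raising/lowering conventions for $J$ to check that $(u_{*}\omega)\times u_{*}X$ reduces to $\al{J\omega,X}\,n$, which is precisely what forces the ansatz $J\omega=n\cdot d\tau$ rather than $\omega=n\cdot d\tau$. The other subtlety is the converse direction, where one needs a distributional version of Poincar\'e's lemma on a simply connected domain with $L^{2}$ regularity; once that is granted, the rest is a routine verification.
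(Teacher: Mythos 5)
Your proposal is correct and follows essentially the same route as the paper. Your observation that the bending condition $du\odot d\tau=0$ is equivalent to antisymmetry of $(X,Y)\mapsto\D_X\tau\cdot\D_Y u$, and hence (in dimension two) forces a scalar representation $\D_X\tau\cdot\D_Y u=\psi\al{JX,Y}$, is precisely what the paper encodes via the wedge-product identity $d\tau\cdot du=\tfrac12\,d\tau\wedge du=\sqrt{g}\,\psi\,dx^1\wedge dx^2$; the two formulations are interchangeable. Your explicit component matching (tangential part $u_*(\psi JX)=\psi\,n\times u_*X$ and normal part $(u_*\omega\times u_*X)\cdot n=\al{J\omega,X}=(n\cdot d\tau)(X)$) fills in the step the paper only signals with ``easily seen,'' and your treatment of uniqueness and of the converse (closedness plus the $L^2$ Poincar\'e lemma on a simply connected domain) mirrors the paper exactly.
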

\begin{proof}
The last statement is immediate, because $M$ is simply connected, and because
for $\tau$ satisfying \eqref{com-0} clearly $du \odot d\tau = 0$.
\\
In order to prove the first part of the lemma, observe that uniqueness follows at once 
from \eqref{com-0} because $u$ is an immersion.
In order to construct $\Omega$, define $\psi$ and $\omega$
by \eqref{le5-2}, \eqref{le5-1}, and set $\Omega = u_*\omega + \psi n$.
Then clearly $\Omega\in L^2$ since so are $\psi$ and $\omega$, and 
\eqref{com-0} is easily seen to be satisfied.
\\
Finally note that indeed there exists a (clearly unique) solution $\psi$
to \eqref{le5-2}. In fact, we can define $\psi$ by
$$
d\tau\wedge du = 2\sqrt{g}\psi dx^1\wedge dx^2;
$$
on the left we contract in $\R^3$.
Now \eqref{le5-2} follows from
$
d\tau\cdot du = \frac{1}{2}\ d\tau\wedge du,
$
which is true because $\tau$ is an infinitesimal bending.
\end{proof}

\begin{lemma}
\label{claim2}
Let $\psi\in W^{1,1}(M)$, $\omega\in W^{1,1}(TM)$ and set $\Omega = u_*\omega + \psi n$.
Then $\Omega\in W^{1,1}(M, \RR^3)$, and the following assertions are equivalent:
\begin{enumerate}[(i)]
\item \label{curlind} We have $d(\Omega\times du) = 0$ in distributions.
\item \label{com-1} The section $d\Omega\times du$ of $T^* M\otimes T^* M\otimes\R^3$ 
is symmetric.
\item \label{claim2-iii} We have almost everywhere $n\cdot d\Omega = 0$ and $\Tr d\Omega = 0$,
where we view $d\Omega$ as a section of $\End(TM)$, which we may do by the first equality.
\item We have, pointwise almost everywhere,
\begin{align}
\label{com-3}
d\psi &= S\omega
\\
\label{com-5}
\div\omega &= 2H\psi.
\end{align}
\item There exists an infinitesimal bending $\tau\in W^{2,1}(S, \RR^3)$ of $u$ satisfying 
$d\tau = \Omega\times du$.
\end{enumerate}
If any of the above assertions is satisfied, then 
\begin{equation}
\label{dOmega}          
d\Omega = u_*(D\omega + \psi S).
\end{equation} 
\end{lemma}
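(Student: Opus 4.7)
My plan is to compute $d\Omega$ explicitly and decompose it into its tangential and normal components. Using the Gauss formula $\D_X(u_*\omega) = u_*(D_X\omega) + A(X,\omega)n$ together with $\D_X n = u_*(SX)$ and $A(X,\omega) = -\al{SX,\omega}$ (all valid pointwise a.e.\ since $u\in W^{2,2}$ and $\omega, \psi\in W^{1,1}$), I arrive at
$$
d\Omega = u_*(D\omega + \psi S) + (d\psi - S\omega)\otimes n,
$$
which is precisely formula \eqref{dtauVP} of Lemma \ref{lele3} in the present regularity. This also shows $\Omega\in W^{1,1}(M,\R^3)$. The equivalence (iii) $\iff$ (iv) together with formula \eqref{dOmega} will then be immediate: $n\cdot d\Omega = 0$ reads $d\psi = S\omega$, which is \eqref{com-3}; and assuming this, $d\Omega = u_*(D\omega + \psi S)$ is tangential, and its trace, viewed as a section of $\End(TM)$, equals $\div\omega + \psi\Tr S = \div\omega - 2H\psi$ (using $\Tr S = -2H$), whose vanishing is \eqref{com-5}.

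For (i) $\iff$ (ii), I directly compute the distributional exterior derivative:
$$
d(\Omega\times du) = (\partial_i\Omega\times\partial_j u + \Omega\times\partial_i\partial_j u)\,dx^i\wedge dx^j,
$$
where the second summand drops by symmetry of mixed partials, so $d(\Omega\times du) = 0$ is exactly symmetry of the tensor $d\Omega\times du$. For (ii) $\iff$ (iv), I substitute the decomposition of $d\Omega$ into $d\Omega\times du$ and use the identities $u_*X\times u_*Y = \al{JX,Y}n$ and $n\times u_*X = u_*(JX)$ to get
$$
(d\Omega\times du)(X,Y) = \al{J(D\omega + \psi S)X, Y}\,n + (d\psi - S\omega)(X)\,u_*(JY).
$$
Symmetry in $X,Y$ then splits into a normal and a tangential condition. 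The tangential part, $(d\psi - S\omega)(X)\,JY = (d\psi - S\omega)(Y)\,JX$, forces the covector $d\psi - S\omega$ to vanish pointwise (by testing with $X$ in its kernel and $Y$ outside), giving \eqref{com-3}. The normal part demands that $\al{JCX,Y}$, with $C = D\omega + \psi S$, be symmetric in $X,Y$; in two dimensions this is equivalent to $\Tr C = 0$ (a short calculation using $J^2 = -I$ and $J^* = -J$), which is \eqref{com-5}.

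Finally, (v) $\Rightarrow$ (i) is just $d\circ d = 0$. For (i) $\Rightarrow$ (v), Lemma \ref{claim1} provides $\tau\in W^{1,2}$ with $d\tau = \Omega\times du$; to promote $\tau$ to $W^{2,1}$, I note that $\partial_i\partial_j\tau = \partial_i\Omega\times\partial_j u + \Omega\times\partial_i\partial_j u$ lies in $L^1$: the second term is controlled by $\Omega\in L^2$ (two-dimensional Sobolev embedding $W^{1,1}\hookrightarrow L^2$) and $\partial_i\partial_j u\in L^2$, while the first is handled by the structural decomposition of $d\Omega$, which expresses each term as an $L^1$ scalar times a direction with bounded $\R^3$-norm (either $n$ or $u_*(J\partial_j)$). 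I expect the main obstacle to be the normal/tangential symmetry analysis in (ii) $\iff$ (iv): separating the $\R^3$-components of the cross-product tensor cleanly and converting the symmetry of $\al{JC\cdot,\cdot}$ into the trace condition $\Tr C = 0$ requires careful two-dimensional bookkeeping.
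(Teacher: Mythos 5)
Your proof is correct and follows essentially the same route as the paper: compute the tangential/normal decomposition $d\Omega = u_*(D\omega + \psi S) + (d\psi - S\omega)\otimes n$, use the Leibniz rule together with symmetry of $\d_i\d_j u$ to identify (i) with (ii), split the symmetry condition of $d\Omega\times du$ into a tangential component (forcing $n\cdot d\Omega = 0$, i.e.\ $d\psi = S\omega$) and a normal component (forcing $\Tr d\Omega = 0$, i.e.\ $\div\omega = 2H\psi$), and finally invoke Lemma \ref{claim1} plus the Leibniz rule to get a $W^{2,1}$ primitive. The only cosmetic difference is that you go from (ii) straight to (iv) via the explicit identity $(d\Omega\times du)(X,Y)$, while the paper routes through (iii); this is the same calculation. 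One small sign slip: with the paper's conventions $u_*W\times u_*Y = -\al{JW,Y}\,n$, so the normal coefficient in your displayed formula should carry a minus sign; this has no bearing on the symmetry analysis or the conclusion $\Tr C = 0$.
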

\begin{proof}
Observe that by the Leibniz rule and since $W^{1,1}$ embeds into $L^2$, 
we have $\Omega\in W^{1,1}$. Clearly
\begin{equation}
\label{dOmega-1}
d\Omega = d(u_*\omega + \psi n) = u_*(D\omega + \psi S) + (d\psi - S\omega)n
\mbox{ a.e. on }M.
\end{equation} 
The Leibniz rule shows that \eqref{curlind} is equivalent to \eqref{com-1}.
The tangential part of \eqref{com-1} is equivalent to $n\cdot d\Omega = 0$.
By \eqref{dOmega-1}, this is just $d\psi = S\omega$, 
and \eqref{dOmega} follows as well.
Next we multiply \eqref{com-1} by $n$ to find $\Tr d\Omega = 0$,
which is equivalent to \eqref{com-5} in view of \eqref{dOmega}.
The existence of an infinitesimal bending $\tau\in W^{1,2}$ solving
$d\tau = \Omega\times du$ is ensured by Lemma \ref{claim1}, and it follows from the Leibniz rule 
that $\Omega\in W^{1,1}$ implies $\tau\in W^{2,1}$.
\end{proof}

{\bf Remark.} If the Gauss curvature $K$ differs from zero on $\o M$,
then the Weingarten map $S$ is invertible. Denoting by $S^{-1}\in L^2(\End(TM))$ 
its fibrewise inverse, we see that the system \eqref{com-3}, \eqref{com-5} is equivalent 
to the conjunction of 
\begin{equation}
\label{wein}
\div (S^{-1}d\psi) - 2H\psi = 0
\end{equation}
with the algebraic equation
\begin{equation}
\label{com-4}
\omega = S^{-1}(d\psi).
\end{equation}

\begin{lemma}
\label{omegab}
Let $\tau\in W^{1,2}(M, \RR^3)$ be an infinitesimal bending of $u\in W^{2,2}_g(S)$,
denote by $\Omega\in L^2(M, \RR^3)$ its
bending field, by $b = n\cdot\Hess\tau$ its linearised second fundamental 
form and by $B$ the section of $\End(TM)$ associated with $b$. Then we have
\begin{equation}
\label{omegab-w} 
\tau\in W^{2,1}(M, \RR^3)\ \iff\ \Omega\in W^{1,1}(M, \RR^3).
\end{equation} 
If these are satisfied, then
\begin{align}
\label{omegab-1} 
(\Hess\tau)(X, Y) &= \D_X\Omega\times \D_Y u + A(X, Y)\ (\Omega\times n)
\end{align}
almost everywhere.
In particular (since $\D_X\Omega$ is tangential),
\begin{align}
\label{dOmb}
\D_X\Omega &= -u_*(JBX)
\\
\label{omegab-2} 
b(X, Y)n &= \D_X\Omega\times \D_Y u\mbox{, i.e., }
b(X, JY) = \D_X\Omega\cdot\D_Y u.
\end{align}
Moreover, writing $\Omega = u_*\omega + \psi n$, we have $\omega$, $\psi\in W^{1,1}$ and
\begin{equation}
\label{claim2-b} 
B = D(J\omega) + \psi J\circ S.
\end{equation}
\end{lemma}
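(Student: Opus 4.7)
The argument splits into two parts: the regularity equivalence \eqref{omegab-w}, and the pointwise identities \eqref{omegab-1}--\eqref{claim2-b}, which hold once either regularity hypothesis is in force. The implication $\Omega\in W^{1,1}\Rightarrow \tau\in W^{2,1}$ in \eqref{omegab-w} is already contained in the last assertion of Lemma \ref{claim2}. For the converse, I would use the formulae of Lemma \ref{claim1}: from $J\omega = n\cdot d\tau$ and $J^2 = -\mathrm{id}$ one reads off $\omega = -J(n\cdot d\tau)$, while $\psi$ is a bilinear algebraic expression in $d\tau$ and $du$. Using that $n$ and $du$ lie in $L^\infty\cap W^{1,2}$ (boundedness of $du$ comes from the isometry constraint, boundedness of $n$ from $u$ being an immersion), combined with $d\tau\in W^{1,1}\hookrightarrow L^2$ and $\Hess \tau\in L^1$ when $\tau\in W^{2,1}$, the product rule places $\omega$ and $\psi$ in $W^{1,1}$, whence $\Omega = u_*\omega + \psi n\in W^{1,1}$.

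Now assume $\Omega\in W^{1,1}$ (equivalently $\tau\in W^{2,1}$). Differentiating the bending-field equation $\D_Y\tau = \Omega\times\D_Y u$ in direction $X$ and applying the Gauss formula $\D_X\D_Y u = u_*(D_XY) + A(X,Y)n$, the term $\D_{D_XY}\tau = \Omega\times u_*(D_XY)$ appearing in the definition $\Hess\tau(X,Y) = \D_X\D_Y\tau - \D_{D_XY}\tau$ exactly cancels the tangential contribution from $u_*(D_XY)$, leaving \eqref{omegab-1}. Taking the $\R^3$ inner product of \eqref{omegab-1} with $n$ and using $n\cdot(\Omega\times n) = 0$ yields $b(X,Y) = n\cdot(\D_X\Omega\times\D_Y u)$. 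Since $\Omega$ is the bending field of an actual infinitesimal bending, $d(\Omega\times du) = 0$, so Lemma \ref{claim2}\eqref{claim2-iii} gives that $\D_X\Omega$ is tangential; hence $\D_X\Omega\times\D_Y u$ is parallel to $n$, and one obtains $\D_X\Omega\times\D_Y u = b(X,Y)n$, which is \eqref{omegab-2}. The second formulation of \eqref{omegab-2} follows from $\D_{JY}u = n\times\D_Y u$ and the vector triple product identity.

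For \eqref{claim2-b} and \eqref{dOmb}, I would invoke formula \eqref{dOmega} of Lemma \ref{claim2} to write $\D_X\Omega = u_*(D_X\omega + \psi SX)$, and then use the identity $u_*V\times u_*W = \langle JV, W\rangle n$ (which follows from $\D_{JX}u = n\times \D_X u$ together with the antisymmetry of $J$ with respect to $g$) to convert $b(X,Y) = n\cdot(\D_X\Omega\times\D_Y u)$ into $\langle BX, Y\rangle = \langle J(D_X\omega + \psi SX), Y\rangle$, i.e., $B = JD\omega + \psi J\circ S$. Since $J$ is parallel for the Levi-Civita connection on an oriented Riemannian 2-manifold, $JD\omega = D(J\omega)$, which yields \eqref{claim2-b}. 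Finally, applying $J^{-1} = -J$ to $BX = J(D_X\omega + \psi SX)$ and substituting back into the formula for $\D_X\Omega$ gives \eqref{dOmb}. The main obstacle is the regularity bookkeeping in the converse direction of \eqref{omegab-w}: one must carefully verify that the product rules and $2$-dimensional Sobolev embeddings deliver $\omega$ and $\psi$ in exactly $W^{1,1}$. The identities themselves are then essentially forced by Lemma \ref{claim2}, the Gauss equation, and the parallelism of $J$.
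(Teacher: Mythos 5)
Your proposal is correct and follows essentially the same route as the paper: differentiate $d\tau=\Omega\times du$ with the Gauss formula to obtain \eqref{omegab-1}, project onto $n$ and use tangentiality of $d\Omega$ (Lemma \ref{claim2}) to get \eqref{omegab-2}/\eqref{dOmb}, and pass through $\D_X\Omega=u_*(D_X\omega+\psi SX)$ plus parallelism of $J$ for \eqref{claim2-b}, with the regularity equivalence \eqref{omegab-w} handled via Lemma \ref{claim2} in one direction and the formulae of Lemma \ref{claim1} with Leibniz rule in the other. You merely spell out the steps that the paper's terse proof leaves implicit, so there is nothing substantively different.
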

\begin{proof}
By Lemma \ref{claim2} we know that $d\Omega$ is tangential.
Formula \eqref{omegab-w} follows from $d\tau = \Omega\times du$.
From this we also deduce \eqref{omegab-1}. By the definition of $b$,
this implies \eqref{omegab-2}, which in turn is just \eqref{dOmb}.
\\
If $\Omega\in W^{1,1}$, then the Leibniz rule shows that $\psi = \Omega\cdot n$
is $W^{1,1}$ and that $\omega\in W^{1,1}$. Formula \eqref{claim2-b} follows from
\eqref{omegab-2}.
\end{proof}

\begin{corollary}
\label{lemma4}
If $\tau\in W^{2,1}(S, \RR^3)$ is an infinitesimal bending of $u$ then
we have $n\cdot\Hess\tau = 0$ almost everywhere if and only if there exist
$c_0, c_1\in\RR^3$ such that
$\tau = c_0 + c_1\times u.$
\end{corollary}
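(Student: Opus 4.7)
My plan is to prove the two implications separately, with ``$\Leftarrow$'' reducing to a direct computation and ``$\Rightarrow$'' exploiting the bending-field representation from Lemma \ref{omegab}. For sufficiency I take $\tau = c_0 + c_1 \times u$, which is the velocity field of a rigid motion and hence an infinitesimal bending (by a routine Leibniz-rule verification of \eqref{ib-00}). Writing $\Hess$ in coordinates as $\d_i\d_j(\cdot) - \Gamma^k_{ij}\d_k(\cdot)$, the constants $c_0$ and $c_1$ commute through both derivatives, so $(\Hess\tau)(X, Y) = c_1 \times (\Hess u)(X, Y)$. By the Gauss formula $(\Hess u)(X, Y) = A(X, Y)\, n$, which gives $(\Hess\tau)(X, Y) = A(X, Y)\, c_1 \times n$; this is tangential to $u$ and therefore orthogonal to $n$.

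For necessity, assume $\tau \in W^{2,1}(S, \R^3)$ is an infinitesimal bending with $b := n\cdot\Hess\tau = 0$ almost everywhere, and let $\Omega$ denote its bending field. By \eqref{omegab-w}, $\Omega$ lies in $W^{1,1}(M, \R^3)$; since $b = 0$, the associated endomorphism $B$ vanishes as well, and formula \eqref{dOmb} then gives $\D_X \Omega = -u_*(J B X) = 0$ for every smooth tangent field $X$. Hence the weak derivative of $\Omega$ vanishes on $S$, and connectedness forces $\Omega$ to equal a constant $c_1 \in \R^3$ almost everywhere. Consequently $d\tau = \Omega\times du = d(c_1\times u)$, and a second application of connectedness yields a constant $c_0 \in \R^3$ with $\tau = c_0 + c_1\times u$.

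The one point requiring care is the regularity: $\tau \in W^{2,1}$ is exactly what (via the remark following Definition \ref{def-IB} together with the equivalence \eqref{omegab-w}) promotes $b$ to an a.e.-defined $L^1$ section and $\Omega$ to $W^{1,1}$, so that the identity $\D_X \Omega = -u_*(JBX)$ holds pointwise almost everywhere and may legitimately be integrated on the connected domain $S$. No deeper obstacle arises; once Lemma \ref{omegab} is in hand, the argument is essentially ``closed implies exact'' on a simply connected domain, applied twice.
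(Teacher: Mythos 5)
Your proof is correct and follows essentially the same route as the paper's: both argue via Lemma \ref{omegab}, deducing that $d\Omega$ vanishes when $b = 0$ (you invoke \eqref{dOmb}, the paper invokes \eqref{omegab-2} together with the observation that $d\Omega$ is tangential — these are two phrasings of the same fact within that lemma), and both then integrate twice to recover $\tau = c_0 + c_1 \times u$. The sufficiency computation you spell out is what the paper dismisses with ``clearly,'' and it is correct.
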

\begin{proof}
If $\tau = c_0 + c_1\times u$, then clearly
$
n\cdot\Hess\tau = 0.
$
Conversely, if $b = n\cdot\Hess\tau = 0$, then \eqref{omegab-2}
implies that the tangential component of $d\Omega$ is zero. But $d\Omega$ is tangential,
so $\Omega$ is constant.
\end{proof}

\subsection{The linearised Gauss-Codazzi-Mainardi system}\label{LGCM} 

The infinitesimal change $b$ of the second 
fundamental form of an immersion $u$ under a bending clearly satisfies
the linearisation of the Gauss-Codazzi-Mainardi system:
The Codazzi-Mainardi equations are linear, so $b$ is Codazzi.
The linearisation of the
Gauss equation under bendings is $\al{JA, b} = 0$. In coordinates, the linearised
Gauss-Codazzi-Mainardi system is this:
\begin{align}
\label{lg}
b_{11} h_{22} + h_{11} b_{22} - 2 h_{12}b_{12} &= 0
\\
\label{lpc-1}
\d_2 b_{11} - \d_1 b_{12} - \Gamma_{12}^k b_{k1}
+ \Gamma_{11}^k b_{k2}  &= 0
\\
\label{lpc-2}
\d_2 b_{12}  - \d_1 b_{22} - \Gamma^k_{22} b_{k1} + \Gamma^k_{12} b_{k2} &= 0.
\end{align}

\begin{proposition}\label{proneu} 
Let $u\in W^{2,2}_g(S)$. Then the following are true:
\begin{enumerate}[(i)]
\item \label{proneu-1} If $\tau\in W^{2,1}(M, \R^3)$ is an infinitesimal bending of $u$ and 
$b = n\cdot\Hess\tau \in L^2(\SM)$ then $b$ is Codazzi and satisfies $\al{JA, b} = 0$.
\item \label{proneu-2} If $b\in L^2(\SM)$ is Codazzi and satisfies $\al{JA, b} = 0$, 
then there exists a solution 
$\tau\in W^{2,1}(M, \R^3)$ of $n\cdot\Hess\tau = b$. The map $\tau$ is an infinitesimal 
bending of $u$,
and it is unique up to trivial infinitesimal bendings.
\end{enumerate}
\end{proposition}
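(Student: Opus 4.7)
The natural approach is to use the bending field representation developed in Section \ref{Rotation}. For (\ref{proneu-1}), Lemma \ref{claim1} furnishes $\Omega \in L^2(M, \R^3)$ with $d\tau = \Omega \times du$, and since $\tau \in W^{2,1}$, Lemma \ref{omegab} upgrades this to $\Omega \in W^{1,1}$ and supplies the identity $\D_X\Omega = -u_*(JBX)$, where $B$ is the endomorphism associated with $b$. The crux is the tautology $d(d\Omega) = 0$: expanding
$$
d(d\Omega)(X, Y) = \D_X(d\Omega(Y)) - \D_Y(d\Omega(X)) - d\Omega([X,Y])
$$
via the Gauss formula $\D_X(u_* Z) = u_*(D_X Z) + A(X, Z) n$ and separating into tangential and normal components, I obtain two identities. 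The tangential one reads $(D_X B)Y = (D_Y B)X$, which is the Codazzi condition on $b$. The normal one reads $A(X, JBY) = A(Y, JBX)$; a direct algebraic computation on $2\times 2$ symmetric matrices yields $AJB + BJA = \al{JA, b} J$ as endomorphisms, so this identity is precisely $\al{JA, b} = 0$.

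For (\ref{proneu-2}), I would reverse the construction. Given $b$ satisfying the linearised Gauss-Codazzi-Mainardi system, define the $\R^3$-valued $1$-form $\alpha$ on $M$ by $\alpha(X) = -u_*(JBX)$. The same computation, read in reverse, shows $d\alpha = 0$: the tangential part is implied by Codazzi of $b$ and the normal part by $\al{JA, b} = 0$. Since $M$ is simply connected and $\alpha \in L^q$ for some $q > 1$ (from $b \in L^2$ and $du \in W^{1,2} \hookrightarrow L^p$ for every finite $p$ in dimension two), there is $\Omega \in W^{1,q}(M, \R^3)$, unique up to a constant, with $d\Omega = \alpha$. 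By construction $d\Omega$ is tangential, so $n\cdot d\Omega = 0$; viewed as an endomorphism of $TM$ it equals $-JB$, whose trace vanishes because $J$ is skew-adjoint and $B$ is symmetric. Decomposing $\Omega = u_*\omega + \psi n$ with $\omega, \psi \in W^{1,q}$, condition (\ref{claim2-iii}) of Lemma \ref{claim2} holds, so condition (v) of that lemma delivers an infinitesimal bending $\tau \in W^{2,1}(S, \R^3)$ with $d\tau = \Omega \times du$.

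To close, I would verify $n \cdot \Hess\tau = b$ directly from \eqref{omegab-2}:
$$
(n \cdot \Hess\tau)(X, JY) = \D_X\Omega \cdot \D_Y u = -u_*(JBX) \cdot u_*(Y) = -\al{JBX, Y} = \al{BX, JY} = b(X, JY),
$$
using that $u$ is isometric and $J$ is skew-adjoint; since $J$ is invertible this forces $n\cdot\Hess\tau = b$. Uniqueness up to a trivial infinitesimal bending is immediate from Corollary \ref{lemma4} applied to the difference of two solutions. The principal obstacle is the algebraic identity $AJB + BJA = \al{JA, b} J$ that pins down the normal integrability condition as the linearised Gauss equation; it is elementary but is the precise reason $\al{JA, b} = 0$ (and nothing weaker or stronger) is the right constraint. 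A secondary issue is the regularity book-keeping needed to place $\Omega$ in $W^{1,1}$ so that Lemma \ref{claim2} applies, which is guaranteed by two-dimensional Sobolev embedding.
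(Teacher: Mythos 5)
Your proof is correct and takes essentially the same route as the paper: expanding $d(d\Omega)$ (resp.\ $d\alpha$) via the Gauss formula and reading off the Codazzi condition from the tangential part and $\al{JA,b}=0$ from the normal part is precisely the content of the paper's key Lemma~\ref{leproneu}, which you simply re-derive inline instead of citing. The only minor difference is regularity bookkeeping: you place $\alpha$ in $L^q$ for some $q>1$ via two-dimensional Sobolev embedding of $du$, whereas the isometry constraint together with the smoothness of $g$ actually gives $du\in L^\infty$ and hence $\alpha\in L^2$, $\Omega\in W^{1,2}$ --- but your weaker bound still suffices to invoke Lemma~\ref{claim2}.
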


The proof uses the following key lemma. In its statement,
$d\rho_B$ is the exterior derivative of $\rho_B$ and
$d^D$ denotes the covariant-exterior derivative acting on $\End(TM)$.

\begin{lemma}\label{leproneu} 
Let $u\in W^{2,2}_g(S)$, let $B\in L^2(\End(TM))$ be symmetric and define the $\R^3$-valued
$1$-form $\rho_B$ by setting
$
\rho_B(X) = u_*(JBX).
$
Then
$$
(d\rho_B)(X, Y) = u_*\left(J(d^D B)(X, Y)\right) + \left( A(X, JBY) - A(Y, JBX) \right)n
$$
as distributions. In particular, $\rho_B$ is closed if and only if
the quadratic form $b\in L^2(\SM)$ associated with $B$ is (weakly) Codazzi 
and satisfies $\al{JA, b} = 0$ almost everywhere.
\end{lemma}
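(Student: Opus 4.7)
The approach is to compute $d\rho_B$ from the invariant formula
\[
(d\rho_B)(X, Y) = X(\rho_B(Y)) - Y(\rho_B(X)) - \rho_B([X, Y])
\]
and split the result into tangential and normal components. Writing $\rho_B(Z) = u_*(JBZ) = \D_{JBZ}u$ and applying the Gauss formula $\D_X\D_V u = u_*(D_X V) + A(X, V)n$ (which holds distributionally under the available regularity, in the spirit of Lemma \ref{lele-2}), each term decomposes as $X(\rho_B(Y)) = u_*(D_X(JBY)) + A(X, JBY)n$. Subtracting the $Y$-counterpart and the bracket contribution yields
\begin{align*}
(d\rho_B)(X, Y) &= u_*\bigl( D_X(JBY) - D_Y(JBX) - JB[X,Y] \bigr) \\
&\qquad + \bigl( A(X, JBY) - A(Y, JBX) \bigr) n.
\end{align*}
Since the almost complex structure on an oriented Riemannian surface is parallel with respect to the Levi-Civita connection, $J$ factors out of the tangential term, producing $u_*(J(d^D B)(X, Y))$ and establishing the claimed identity in the smooth case.

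To handle the low regularity ($u\in W^{2,2}$, $B\in L^2$), I would work in a coordinate chart with $X = \d_i$, $Y = \d_j$, so that $[X, Y] = 0$; the individual terms in the local coordinate expression for $d\rho_B$ are then interpreted as distributions via the $W^{1,2}\cdot H^{-1}$ duality convention fixed in the paper (for products such as $\d_i((JB)_j^k)\,\d_k u$), while products like $(JB)_j^k\,\d_i\d_k u$ are merely $L^1$. Equivalently, one may mollify $u$ and $B$, apply the identity in the smooth regime, and pass to the limit using continuity of $d^D\colon L^2 \to H^{-1}$, continuity of $u\mapsto A$ into $L^2$, and the continuity of $u_*$ on the relevant factors.

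For the ``in particular'' claim, closedness of $\rho_B$ decouples into the vanishing of its tangential and normal parts. The tangential vanishing reads $J(d^D B) = 0$, hence $d^D B = 0$; for a symmetric $B$ this is exactly the classical Codazzi condition, which at the distributional level coincides with $\div(Jb) = 0$ as introduced earlier. A short computation in a local orthonormal frame with $Je_1 = e_2$, $Je_2 = -e_1$ gives
\[
A(e_1, JBe_2) - A(e_2, JBe_1) = -\bigl( h_{11}b_{22} + h_{22}b_{11} - 2h_{12}b_{12} \bigr) = -\langle JA, b\rangle,
\]
so the normal part vanishes almost everywhere if and only if $\langle JA, b\rangle = 0$ a.e.

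I expect the main obstacle to be the distributional bookkeeping: $\rho_B$ itself lies only in $L^p$ for some $p < 2$, so neither the Gauss formula nor the defining formula for $d$ applies classically, and one must be careful to legitimize each product via the $W^{1,2}\cdot H^{-1}$ framework (or to justify convergence of the smooth approximations in the appropriate topologies). Once this is in place, the geometric content reduces to two robust facts---the Gauss formula and $DJ = 0$---both of which pass to the $W^{2,2}$ regime by approximation.
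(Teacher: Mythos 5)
Your proof is correct and follows essentially the same route as the paper: compute $d\rho_B$ from the invariant formula, apply the Gauss formula to split off the normal part, and use $DJ=0$ to collect the tangential part into $J(d^D B)$; the sign computation giving $-\langle JA,b\rangle$ checks out. One small correction to your regularity worry: $\rho_B$ actually lies in $L^2$ (not merely $L^p$ for $p<2$), since the isometry constraint makes $\nabla u$ bounded.
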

\begin{proof}
We compute
\begin{align*}
d\rho_B(X, Y) 
&=
\D_X \D_{JBY} u - \D_Y \D_{JBX} u + u_*\left( JBD_Y X - JBD_XY \right)
\\
&= \left( A(X, JBY) - A(Y, JBX) \right)n 
\\
&+ u_*\left( D_XJBY - D_YJBX - JB[X, Y] \right),
\end{align*}
which is the claim because $D_XJ = JD_X$.
\end{proof}

\begin{proof}[Proof of Proposition \ref{proneu}]
To prove \eqref{proneu-1} note that
Lemma \ref{omegab} implies that $\Omega\in W^{1,1}$ and $\D_X\Omega = - u_*(JBX)$.
Hence, with $\rho_B$ as in Lemma \ref{leproneu} we have $\rho_B = -d\Omega$.
Hence $\rho_B$ is closed, so Lemma \ref{leproneu} implies that $b$
is Codazzi and satisfies $\al{JA, b} = 0$.
\\
To prove \eqref{proneu-2} let $B\in L^2(\End(TM))$ be induced by $b$,
and let $\rho_B$ as in Lemma \ref{leproneu}. Then the lemma implies that $\rho_B$
is closed, so since $M$ is simply connected and since $\rho_B\in L^2$
(because $du\in L^{\infty}$ and $B\in L^2$),
there exists $\Omega\in W^{1,2}(M, \R^3)$
such that $d\Omega = -\rho_B$, i.e., $\D_X\Omega = - u_*(JBX)$.
Clearly $\Omega$ is unique up to a constant vector, and
it satisfies Lemma \ref{claim2} \eqref{claim2-iii}.
Hence that lemma shows that there exists $\tau\in W^{2,1}$ such that 
$d\tau = \Omega\times du$, and $n\cdot\Hess\tau = b$ e.g. by Lemma \ref{omegab}.
\\
Clearly, for given $\Omega$, the map $\tau$ is unique up to a constants;
since the same is true for $\Omega$ itself, this implies that $\tau$ is unique
up to trivial infinitesimal bendings.
\end{proof}

{\bf Remark.} For completeness, we note that
the infinitesimal change of the normal vector $n$
under an infinitesimal bending $\tau$ of $u$ with bending field $\Omega$ 
is given by
$
\mu = \Omega\times n.
$
In fact, denoting by a dot the infinitesimal change of a quantity under
the displacement $\tau$, we have
$
0 = (n\cdot du)\dot{ } = \dot n\cdot du + n\cdot d\tau.
$
This conditions and $n\cdot\dot n = 0$ determine $\dot n$ uniquely.
It is easy to check that $\mu$ satisfies these two conditions.
\\
By linearisation of the Gauss and Weingarten equations for $u$ it is easy to see that
the velocity field $\tau$ of a bending of $u$ satisfies
\begin{align}
\label{lGW-1}
\Hess\tau &= A\otimes\mu + b\otimes n
\\
\label{lGW-2}
\D_X\mu &= \D_{SX}\tau - \D_{BX} u,
\end{align}
This remains true if $\tau$ is 
an arbitrary infinitesimal bending; for \eqref{lGW-1} cf. Lemma \ref{omegab},
and for \eqref{lGW-2} refer, e.g., to \cite{H-preprint}.

\section{Stationary points of $\W_g$}

The following definitions are central, as they
provide a natural notion of (possibly non-minimising) stationary points of $\W_g$.

\begin{definition}
For a given immersion $u\in W^{2,2}_g(S)$ we make the following definitions:
\begin{itemize}
\item A bending of $u$ is a strongly continuous 
one-parameter family $\{u_t\}_{t\in (-1,1)}\subset W^{2,2}_g(S)$ with $u_0 = u$,
and which is such that the weak $L^2$-limit
\begin{equation}
\label{der-general}
b = \lim_{t\to 0}\frac{1}{t} (A_t - A)
\end{equation}
exists. Here $A$ denotes the second fundamenal form of $u$ and 
$A_t$ that of $u_t$. 
\item The section $b\in L^2(\SM)$ is called the linearised second fundamental form induced by 
the bending $\{u_t\}_{t\in (-1,1)}$.
\item Any $b\in L^2(\SM)$ induced as in \eqref{der-general} by 
some strongly continuous family $\{u_t\}_{t\in (-1,1)}\subset W^{2,2}_g(S)$ is called a {\em continuable} linearised
second fundamental form for $u$.
\end{itemize}
\end{definition}

Another natural but (without further regularity hypotheses on $u$)
slightly more restrictive notion would be to regard
$\{u_t\}$ as a bending of $u$ if the weak $W^{2,2}$ limit
\begin{equation}
\label{der} 
\tau = \lim_{t\to 0} t^{-1}(u_t - u_0)
\end{equation} 
exists. In this case, $\tau$ is called the infinitesimal bending 
induced by the bending $\{u_t\}_{t\in (-1,1)}$. Any vector field $\tau$ induced in this manner
by a $W^{2,2}$-bending of $u$ is called a continuable infinitesimal $W^{2,2}$-bending of $u$.
\\
Of course the regularity hypotheses chosen in these definitions are somewhat
arbitrary; one may well wish to consider more regular maps.
One may also impose boundary conditions on $u$ and on the admissible bendings 
(and hence on their induced infinitesimal bendings).
\\
Observe that in the case without boundary conditions considered here, a
trivial class of $W^{2,2}$ bendings is given by the rigid motions. The corresponding
continuable infinitesimal bending fields are precisely those whose gradient is of the form 
$\Omega\times du$ for some constant $\Omega\in\RR^3$. Immersions $u$ which only permit such trivial
bendings are called rigid. They are clearly stationary.
\\

{\bf Remarks.} 

\begin{enumerate}[(i)]
\item We adopt the term `continuable' from the review paper
\cite{IvanovaSabitov-1} and other papers on this subject 
(cf. e.g. \cite{KolegaevaFomenko, Isanov, Klimentov}),
in order to highlight the connection to this large body of literature.
\item Clearly, the limit $\tau\in W^{2,2}(S, \RR^3)$ in
\eqref{der} necessarily is an infinitesimal bending of $u$.
Similarly, every continuable linearised second fundamental
form $b\in L^2(\SM)$ is Codazzi and satisfies $\al{JA, b} = 0$ almost everywhere on $M$
(cf. Section \ref{IBs} for details).
\\
However, it is well-known that the converse is false in general, i.e.,
the class of infinitesimal bendings can be strictly larger than the class of
continuable infinitesimal bendings. Nevertheless, when $K > 0$ then
the two classes are known to agree (in the presence of enough regularity), 
cf. \cite{Klimentov, IvanovaSabitov-2, LeMoPa}. A similar result has recently
been obtained in \cite{H-cpde} for nondegenerate intrinsically flat (i.e. $K = 0$) 
surfaces $u$; it is false in the degenerate case when $u$ contains a planar region.
\end{enumerate}

\begin{definition}
\label{def-2}
For a given immersion $u\in W^{2,2}_g(S)$ we say that
\begin{itemize}
\item $u$ is stationary for $\W_g$ provided that 
$$
\frac{d}{dt}\Big|_{t = 0} \WW_g(u_t) = 0\mbox{ for all bendings $\{u_t\}_{t\in (-1,1)}$ of $u$.}
$$
\item $u$ is formally stationary for $\W_g$ provided that 
$\int_M \al{A, b} = 0$
for all Codazzi tensors $b\in L^2(\SM)$ satisfying $\al{JA, b} = 0$ almost everywhere on $M$.
\end{itemize}
\end{definition}

The notion of formal stationarity is motivated by the following remark:

\begin{proposition}
Let $u\in W^{2,2}_g(S)$ and let $\{u_t\}_{t\in (-1,1)}$ be a bending of $u$
inducing the linearised second fundamental form $b$. Then
\begin{equation}
\label{gene-1-general}
\frac{d}{dt}\Big|_{t = 0} \WW_g(u_t) =
\int_M \al{A, b}.
\end{equation}
In particular, $u$ is stationary for $\W_g$ if and only if $\int_M \al{A, b} = 0$
for all continuable linearised second fundamental forms $b$.
So every formally stationary immersion is stationary.
\end{proposition}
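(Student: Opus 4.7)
The starting point is the elementary algebraic identity
$$
|A_t|^2 - |A|^2 = \al{A_t - A, A_t + A}
$$
pointwise on $M$. Integrating and dividing by $t$ gives
$$
\frac{\W_g(u_t) - \W_g(u)}{t} = \frac{1}{2}\int_M \al{\tfrac{1}{t}(A_t - A), A_t + A}.
$$
The factor $\tfrac{1}{t}(A_t - A)$ converges weakly in $L^2(\SM)$ to $b$ by hypothesis. The whole argument therefore reduces to establishing the \emph{strong} $L^2$-convergence $A_t + A \to 2A$, for then the standard weak--strong pairing yields the limit $\int_M \al{b, 2A} = 2\int_M \al{A,b}$, which after division by $2$ gives \eqref{gene-1-general}.

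The task is thus to prove $A_t \to A$ in $L^2(\SM)$ from the strong continuity $u_t \to u$ in $W^{2,2}$. The natural way is to use the explicit formula $h_{t,ij} = n_t\cdot\d_i\d_j u_t$ and write
$$
h_{t,ij} - h_{ij} = (n_t - n)\cdot \d_i\d_j u_t + n\cdot(\d_i\d_j u_t - \d_i\d_j u).
$$
The second summand tends to zero in $L^2$ because $|n| = 1$ and $\d^2 u_t \to \d^2 u$ in $L^2$. For the first summand one exploits the isometry constraint: $(\nabla u_t)^T (\nabla u_t) = g$ gives a uniform $L^\infty$-bound on $\d u_t$ and $|\d_1 u_t\times\d_2 u_t| = \sqrt{g}$ is bounded away from zero, so $n_t$ depends smoothly on $\d u_t$ away from a uniform tubular neighbourhood of the singular set. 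Combined with $\d u_t \to \d u$ in $L^p$ for every $p < \infty$ (strong $W^{2,2}$-convergence plus the uniform $L^\infty$-bound), this gives $n_t \to n$ in $L^p$ for every $p < \infty$, and in particular $\|n_t - n\|_{L^\infty}$ remains bounded by $2$ with $n_t \to n$ pointwise a.e.\ along a subsequence.

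To finish the strong convergence, I would split
$$
\int_M |n_t - n|^2 |\d^2 u_t|^2 = \int_M |n_t - n|^2\bigl(|\d^2 u_t|^2 - |\d^2 u|^2\bigr) + \int_M |n_t - n|^2 |\d^2 u|^2.
$$
The first integral is bounded by $4\,\bigl\||\d^2 u_t|^2 - |\d^2 u|^2\bigr\|_{L^1}$, which tends to zero since $\d^2 u_t \to \d^2 u$ in $L^2$. The second integral tends to zero along any subsequence on which $n_t \to n$ a.e., by dominated convergence with majorant $4|\d^2 u|^2 \in L^1$. Since this holds along every subsequence, $A_t \to A$ in $L^2$, and the weak--strong pairing completes the derivation of \eqref{gene-1-general}.

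The remaining assertions are immediate. By Definition \ref{def-2}, stationarity means that $\frac{d}{dt}\big|_{t=0}\W_g(u_t) = 0$ for every bending, which by \eqref{gene-1-general} is exactly $\int_M \al{A, b} = 0$ for every continuable linearised second fundamental form $b$. Finally, every continuable $b$ is Codazzi and satisfies $\al{JA, b} = 0$ (this was recorded in the remark following Definition \ref{def-2} and is a direct consequence of Proposition \ref{proneu}\eqref{proneu-1}, since the weak $L^2$-limit preserves both linear constraints), so formal stationarity is a stronger condition and hence implies stationarity. The only subtle point in the argument is the strong $L^2$-convergence of $A_t$, which is where the isometry constraint plays its crucial role.
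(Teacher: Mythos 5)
Your proof is correct, but it takes a genuinely different and longer route to the key step. Both arguments reduce to showing $A_t \to A$ strongly in $L^2(\SM)$, so that the weak--strong pairing with $\tfrac1t(A_t - A)\rightharpoonup b$ gives the limit. The paper obtains this strong convergence almost for free: since $\tfrac1t(A_t - A)$ converges weakly in $L^2$ as $t\to 0$, that family is bounded in $L^2$ (uniform boundedness principle applied along sequences $t_n\to 0$), so $\|A_t - A\|_{L^2}\leq Ct$, and strong convergence of $A_t$ follows at once -- with no appeal to the strong $W^{2,2}$-continuity of $u_t$ at all. You instead derive $A_t\to A$ strongly in $L^2$ from the strong $W^{2,2}$-continuity, expanding $h_{t,ij} - h_{ij}$, controlling $n_t - n$ in every $L^p$ ($p<\infty$) by the isometry constraint (uniform lower bound on $|\d_1 u_t\times\d_2 u_t|$), and using a dominated-convergence/subsequence argument to handle the pairing with $|\d^2 u_t|^2$. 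This works and is instructive in how it uses the isometry constraint, but it is considerably more involved and uses a hypothesis (strong $W^{2,2}$-continuity) that the paper's argument does not touch. As a minor stylistic point, your phrase ``the weak $L^2$-limit preserves both linear constraints'' is slightly misleading: the Codazzi constraint on $A_t - A$ is genuinely linear, but the Gauss constraint is quadratic and its linearisation $\al{JA,b}=0$ is obtained only after observing $\det(A_t-A) = O(t^2)$ in $L^1$; the conclusion is still correct, but the reasoning deserves a word of care.
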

\begin{proof}
By the weak $L^2$-convergence \eqref{der-general} we have 
$
\|A_t - A\|_{L^2}\leq Ct,
$
so $A_t\to A$ strongly in $L^2$, hence
$A + A_t\to 2A$ strongly in $L^2$. Hence using \eqref{der-general}, we conclude that, as $t\to 0$,
$$
\frac{1}{t}\int_M |A_t|^2 - |A|^2
= \frac{1}{t}\int_M \al{A_t + A, A_t - A}
\to 2\int_M \al{A, b}.
$$
\end{proof}

The next lemma follows from a simple computation.

\begin{lemma}\label{le:01} 
If $q$, $b\in L^2(\SM)$ and $\al{Jq, b} = 0$ then $\al{q, b} = (\Tr q)(\Tr b)$.
\end{lemma}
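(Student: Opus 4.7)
The plan is to reduce the lemma to a pointwise algebraic identity on symmetric $2$-tensors in two dimensions, namely:
\begin{equation*}
q + Jq = (\Tr q)\, g.
\end{equation*}
Once this identity is established, taking the inner product $\al{\cdot, b}$ on both sides immediately gives
\begin{equation*}
\al{q, b} + \al{Jq, b} = (\Tr q)\, \al{g, b} = (\Tr q)(\Tr b),
\end{equation*}
since $\al{g, b} = g^{\alpha\beta} b_{\alpha\beta} = \Tr b$. The hypothesis $\al{Jq, b} = 0$ then yields the claim.

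To verify the identity $q + Jq = (\Tr q)\, g$, I would work pointwise in a $g$-orthonormal frame at each point. In such a frame $g$ becomes the identity matrix and $J$ reduces to the standard rotation by $\pi/2$, i.e.\ $J = \bigl(\begin{smallmatrix} 0 & -1 \\ 1 & 0\end{smallmatrix}\bigr)$ in $(1,1)$-components. Writing the symmetric matrix representing $q$ as $\bigl(\begin{smallmatrix}a & c \\ c & b\end{smallmatrix}\bigr)$, a direct computation from $(Jq)(X, Y) = q(JX, JY)$ gives
\begin{equation*}
Jq = \begin{pmatrix} b & -c \\ -c & a \end{pmatrix},
\end{equation*}
so that $q + Jq = (a+b)\, I = (\Tr q)\, g$. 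Equivalently, this follows from the fact that for a symmetric $2\times 2$ matrix $Q$ the cofactor satisfies $\cof Q = (\Tr Q)\, I - Q$, combined with the relation $(Jq)^{\alpha\beta} = |g|^{-1}(\cof q)_{\alpha\beta}$ stated in the setting.

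There is no real obstacle here; the statement is a purely algebraic identity that holds pointwise on $M$, and the $L^2$ regularity of $q$ and $b$ plays no role beyond ensuring that the scalar products $\al{q, b}$, $\al{Jq, b}$, and the products $(\Tr q)(\Tr b)$ are meaningful. The only subtle point is selecting the orthonormal frame (or working directly with the cofactor identity) to avoid a messy coordinate computation involving $\sqrt{g}$.
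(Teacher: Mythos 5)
Your proof is correct. The paper omits the argument entirely, stating only that the lemma ``follows from a simple computation,'' and the pointwise identity $q + Jq = (\Tr q)\,g$ (verified in a $g$-orthonormal frame, or via $\cof Q = (\Tr Q)I - Q$) together with $\al{g,b} = \Tr b$ is precisely that computation.
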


\begin{theorem}\label{pr:01} 
If $u\in W^{2,2}_g(S)$ then the following are equivalent:
\begin{enumerate}[(i)]
\item The immersion $u$ is formally stationary.
\item We have $\int_M H \Tr b = 0$
for all Codazzi tensors $b\in L^2(\SM)$ satisfying $\al{JA, b} = 0$.
\item \label{pr:01-3} We have $\int_M H \Tr b = 0$ 
for all $b\in L^2(\SM)$ with $\div b = 0$ and $\al{A, b} = 0$.
\item \label{pr:01-4} There exist sequences of Lagrange multipliers $\la^{(n)}\in C_0^{\infty}(M)$ and $Y^{(n)}\in C^{\infty}_0(TM)$
such that
\begin{equation}
\label{larub}
\la^{(n)}A + \Def Y^{(n)}\weak Hg
\end{equation}
weakly in $L^2(\SM)$.
\end{enumerate}
\end{theorem}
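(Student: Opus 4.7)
The plan is to prove the chain (i)$\iff$(ii)$\iff$(iii)$\iff$(iv) in three short steps, each of which admits a conceptual one-shot argument.

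For (i)$\iff$(ii) I would apply Lemma \ref{le:01} with $q = A$: any $b$ satisfying the common hypothesis $\al{JA, b} = 0$ automatically satisfies $\al{A, b} = (\Tr A)(\Tr b) = 2H\,\Tr b$. Integrating over $M$ gives $\int_M \al{A, b} = 2\int_M H\,\Tr b$, so (i) and (ii) impose the same vanishing condition over the same class of $b$.

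For (ii)$\iff$(iii) the key observation is that $J$ induces an involution on $\SM$ which swaps the two constraint pairs. Since $J^2 = -\mathrm{Id}$ on $TM$, the fibrewise action $(Jq)(X,Y) = q(JX,JY)$ on $\SM$ satisfies $J^2 = \mathrm{Id}$. Decomposing $q = q^0 + \tfrac{1}{2}(\Tr q)g$ with $q^0$ trace-free, a short calculation gives $Jq = -q^0 + \tfrac{1}{2}(\Tr q)g$; hence $J$ preserves the trace and is a fibrewise isometry on $\SM$, so $\al{JA, b} = \al{A, Jb}$. Combined with the definition (recalled in the excerpt) that $b$ is Codazzi iff $\div(Jb) = 0$, the substitution $\tilde b = Jb$ is a linear bijection from $\{b \in L^2(\SM): b\text{ Codazzi},\ \al{JA,b}=0\}$ onto $\{\tilde b \in L^2(\SM): \div \tilde b = 0,\ \al{A,\tilde b} = 0\}$ that preserves $\int_M H\,\Tr b$.

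The main step is (iii)$\iff$(iv), which I would treat by Hahn--Banach. Let
$$
\VV = \bigl\{\la A + \Def Y : \la\in C^\infty_0(M),\ Y\in C^\infty_0(TM)\bigr\}\subset L^2(\SM),
$$
a linear subspace. By Mazur's theorem its weak and strong $L^2$-closures coincide, so (iv) is precisely the statement that $Hg$ lies in the $L^2$-closure of $\VV$. By Hahn--Banach this holds iff every $b\in L^2(\SM)$ that is $L^2$-orthogonal to $\VV$ also satisfies $\int_M \al{Hg, b} = 0$. Now orthogonality to $\{\la A\}_{\la}$ is equivalent to $\al{A,b} = 0$ a.e.; orthogonality to $\{\Def Y\}_Y$ is equivalent to $\div b = 0$ as distributions (since the formal adjoint of $\Def$ is $-\div$, as recalled in the setting); and $\int_M \al{Hg, b} = \int_M H\,\Tr b$. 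These three observations recast the Hahn--Banach criterion as exactly statement (iii).

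I expect no genuine analytic obstacle here: coerciveness and compactness play no role, and the theorem is purely structural/dual in nature. The only point needing some care is the computation of the action of $J$ on $\SM$ used in the second step, and this reduces to the elementary orthogonal decomposition of symmetric $2$-tensors into trace and trace-free parts.
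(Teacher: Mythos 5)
Your proof is correct and follows essentially the same route the paper takes: Lemma~\ref{le:01} with $q=A$ for (i)$\iff$(ii), the involution $b\mapsto Jb$ together with $\Tr(Jb)=\Tr b$ for (ii)$\iff$(iii), and the Hahn--Banach/Mazur duality argument based on $-\div$ being the formal adjoint of $\Def$ for (iii)$\iff$(iv). The paper states these steps without spelling out the details; you supply the computation showing that $J$ acts on $\SM$ by negating the trace-free part and fixing the pure-trace part (hence is a self-adjoint isometry with $J^2=\mathrm{Id}$), which is exactly the fact the paper invokes implicitly.
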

\begin{proof}
The equivalence of the first three items follows from Lemma \ref{le:01}
and the fact that $\Tr b = \Tr (Jb)$.
The equivalence of \eqref{pr:01-3} and \eqref{pr:01-4} follows by standard functional analysis from the fact that 
$-\Def$ is the formal adjoint of the divergence operator on sections of $\SM$,
cf. e.g. \cite{H-PRSE, H-preprint} for details.
\end{proof}

\begin{corollary}\label{cor01} 
Let $u\in W^{3,1}_g(S)$ be formally stationary.
Then the following are true:
\begin{enumerate}[(i)]
\item \label{cor01-1} We have 
$
\int_M \D_V (H^2) + \Phi H(4H^2 - 2K) + H\Delta\Phi = 0
$
for all $V\in W^{2,2}(TM)$, $\Phi\in W^{2,2}(M)$ satisfying $\Def V = \Phi A$,
\item \label{cor01-2} We have
$
\int_M H\div(J\omega) = 0
$
for all $\omega\in W^{1,2}(TM)$ such that there exists $\psi\in W^{1,2}(M)$ with
\begin{align*}
d\psi &= S\omega
\\
\div \omega &= 2H\psi.
\end{align*}
\item \label{cor01-3} If $K\neq 0$ then
$
\int H\div(JS^{-1}d\psi) = 0
$
whenever $\psi\in W^{1,2}(M)$ satisfies
$
\div (S^{-1}d\psi) = 2H\psi.
$
Here $S^{-1}$ is the section of $\End(TM)$ obtained by inverting $S$ fibrewise.
\end{enumerate}
\end{corollary}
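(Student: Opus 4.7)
The strategy is to apply Theorem \ref{pr:01}, which characterises formal stationarity as the condition $\int_M H\Tr b = 0$ for every Codazzi tensor $b\in L^2(\SM)$ satisfying $\al{JA, b} = 0$. For each of the three items I build a suitable test tensor $b$ from the given data, compute its trace, and read off the claimed identity.

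For (i), I set $\tau = u_*V + \Phi n$. Since $\Def V = \Phi A$, Lemma \ref{vple2} ensures that $\tau$ is an infinitesimal bending, so by Proposition \ref{proneu} the tensor $b = n\cdot\Hess\tau$ is an admissible test tensor. Formula \eqref{hesstauVP} gives
\[
b = D_V A - 2\, S\odot DV + \Hess\Phi - \Phi\, S\odot S.
\]
I then take traces term by term: $\Tr(D_V A) = D_V(\Tr A) = 2\D_V H$, $\Tr\Hess\Phi = \Delta\Phi$, and $\Tr(S\odot S) = |S|^2 = |A|^2 = 4H^2 - 2K$. The main algebraic step is the evaluation of $\Tr(S\odot DV)$: since $S$ is symmetric, only the symmetric part of $DV$, namely $\Def V = \Phi A$, contributes, and contracting with $\al{SX, Y} = -A(X, Y)$ yields $\Tr(S\odot DV) = -\Phi|A|^2$. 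Assembling the four terms, $\Tr b = 2\D_V H + \Phi(4H^2 - 2K) + \Delta\Phi$, and $\int_M H\Tr b = 0$ becomes (i) after rewriting $2H\D_V H = \D_V(H^2)$.

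For (ii), I apply Lemma \ref{claim2} to $\omega$ and $\psi$: the hypotheses $d\psi = S\omega$ and $\div\omega = 2H\psi$ produce an infinitesimal bending $\tau\in W^{2,1}$ with bending field $\Omega = u_*\omega + \psi n$, and Lemma \ref{omegab} identifies its linearised second fundamental form as
\[
B = D(J\omega) + \psi\, J\circ S.
\]
Here $\Tr D(J\omega) = \div(J\omega)$ by definition of divergence, whereas $\Tr(J\circ S) = 0$ because $J$ is the rotation by $\pi/2$ acting on a symmetric endomorphism (immediate in any oriented orthonormal frame). Hence $\Tr B = \div(J\omega)$ and formal stationarity yields (ii). Part (iii) is then immediate from (ii) via the remark following Lemma \ref{claim2}: when $K\neq 0$, setting $\omega := S^{-1}d\psi$ converts the single hypothesis of (iii) into the pair of hypotheses of (ii).

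The main obstacle is the identification $\Tr(S\odot DV) = -\Phi|A|^2$ in step (i), which rests on decomposing $DV$ into its symmetric and antisymmetric parts and invoking the infinitesimal isometry constraint. A secondary, purely technical obstacle is confirming that each constructed $b$ lies in $L^2(\SM)$ so that the formal-stationarity identity is directly applicable; this is exactly where the mild regularity boost $u\in W^{3,1}_g(S)$ over $W^{2,2}_g(S)$ is used, via the two-dimensional Sobolev embeddings, possibly combined with a standard approximation to absorb any small loss in integrability of products such as $\psi J\circ S$.
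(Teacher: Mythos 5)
The proposal is correct and follows essentially the same route as the paper: for each item it constructs the test tensor $b = n\cdot\Hess\tau$ from the given data via the displacement representation (Lemma \ref{vple2}, Lemma \ref{lele3}) in (i) and the bending-field representation (Lemma \ref{claim2}, Lemma \ref{omegab}) in (ii), invokes Proposition \ref{proneu} to verify that $b$ is admissible, and reads off $\Tr b$ from \eqref{hesstauVP} and \eqref{claim2-b} before applying Theorem \ref{pr:01}\eqref{pr:01-3}. The only difference is that you spell out the trace computation $\Tr(S\odot DV) = -\Phi|A|^2$ explicitly (using $P := \sym DV = -\Phi S$ and $\Tr(S\circ(\text{antisym }DV)) = 0$), which the paper leaves implicit, and you flag the integrability of $b$ as a mild technical point; both are faithful to what the paper's proof is actually doing.
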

\begin{proof}
If $u$, $V$ and $\Phi$ are as in the hypotheses then the map
$
\tau = u_*V + \Phi n
$
is an infinitesimal bending of
$u$, cf. Lemma \ref{vple2}. Moreover, $b = n\cdot\Hess\tau\in L^2$, according to Lemma \ref{lele3}. 
Proposition \ref{proneu} \eqref{proneu-1} shows that
$b = n\cdot\Hess\tau$ is Codazzi with $\al{JA, b} = 0$. Hence statement \eqref{cor01-1}
follows from Theorem \ref{pr:01} and because 
$$
\Tr b = 2 \D_V H + \Phi |A|^2 + \Delta\Phi,
$$
due to \eqref{hesstauVP}.
\\
If $u$, $\omega$ and $\psi$ are as in the hypotheses then the map
$
\Omega = u_*\omega + \psi n
$
belongs to $W^{1,1}$. 
Moreover, $\Omega$ is the bending field of some infinitesimal bending $\tau$ of
$u$, cf. Lemma \ref{claim2}. Proposition \ref{proneu} \eqref{proneu-1} shows that
$b = n\cdot\Hess\tau$ is Codazzi with $\al{JA, b} = 0$. 
Note that \eqref{claim2-b} implies that $b\in L^2$.
Hence statement \eqref{cor01-2}
follows from Theorem \ref{pr:01} and because 
$
\Tr b = \div\left( J\omega \right),
$
due to \eqref{claim2-b}.
\\
Statement \eqref{cor01-3} follows from \eqref{cor01-2}.
\end{proof}

Observe that the conclusions of Corollary \ref{cor01} are not merely
consequences of formal stationarity
of $u$, but in fact they are roughly equivalent to it. Moreover,
Lagrange multiplier rules as in \eqref{larub} can be easily derived
from Corollary \eqref{cor01}
\\

Clearly, the absolute minimisers
of $\W_g$ when $K = 0$ are the affine maps. This remains true for formally stationary points:

\begin{remark}
Assume that $g$ has constant Gauss curvature $K = 0$. 
Then $u\in W^{2,2}_g(S)$ is formally stationary if and only if $u$ is affine.
\end{remark}
\begin{proof}
If $K = 0$ then $\al{JA, A} = 0$. Since, moreover, $A\in L^2$ is Codazzi,
it is an admissible test tensor in the definition of formal stationarity,
which therefore implies that $\int_M |A|^2 = 0$.
\end{proof}

\begin{remark}
Assume that $g$ has constant Gauss curvature $K = K_0 > 0$. 
Then the absolute minimiser of $\W_g$ is the standard immersion of $(S, g)$ as a subset of the
sphere of radius $K_0^{-1}$.
\end{remark}
\begin{proof}
This follows from the fact that the absolute minimiser of $\W_g$ agrees with that of
the functional \eqref{wg0}, and that surfaces consisting of umbilical points are (subsets of) round spheres.
\end{proof}

\subsection*{Remark about related variational problems}

As seen earlier, up to addition of a term depending only on the metric $g$, the functional $\W_g$ agrees
with the Willmore functional $\int H^2$, restricted to $W^{2,2}_g$. 
Hence, as observed e.g. in \cite{EfratiSharonKupferman-PRE},
minimal immersions are absolute minimisers. So if the metric $g$ is induced by some minimal surface,
then this minimal surface is an absolute minimiser of $\W_g$. 
More generally, if $u\in W^{2,2}_g(S)$ is a stationary point of the classical
Willmore functional (i.e., without the isometry constraint), then it clearly is also 
stationary for $\W_g$.
In this sense, Willmore surfaces are critical points of $\W_g$. Note, however, that a 
`Willmore surface' in this context has a boundary and must satisfy certain boundary conditions.

\section{Symmetric immersions}

The notion of symmetry used in this chapter is common in the context of
nonlinear wave maps, cf. \cite{Struwe-CPAM}. We only use it as a convenient
way to express rotational symmetry.

\subsection{Principle of symmetric stationarity}

Let $G$ be a Lie group acting transitively on $M$ and isometrically on $\R^3$. 
For each $\rho\in G$ define $\la_{\rho} : M\to M$ by $\la_{\rho}(x) = \rho x$.
The action of $\rho$ in $\R^3$ is denoted by $L_{\rho}\in SO(3)$.
An immersion $u : M\to\R^3$ is said to be symmetric if
$$
u\circ\la_{\rho} = L_{\rho}u\mbox{ for all }\rho\in G.
$$
We are only interested in the case when $S = B_1$ is the unit ball and $\D\la_{\rho}\in SO(2)$.
\\
Denote by $\mu_G$ the Haar measure on $G$ normalised such that $\mu_G(G) = 1$. For 
an immersion $u: M\to\R^3$
define $(u)_G : M\to\R$ by
$$
(u)_G(x) = \int_G L_{\rho}^{-1} u(\la_{\rho}(x))\ d\mu_G(\rho).
$$
Clearly, $u$ is symmetric precisely if $u = (u)_G$. 
\\
A section $q$ of $\SM$ is said to be invariant if $\la^*_\rho q = q$ for all $\rho\in G$;
here $\la_{\rho}^*q$ denotes the pullback under the diffeomorphism $\la_{\rho}^*$.
We define
$$
(q)_G = \int_G \la_{\rho}^*q\ d\mu_G.
$$
The following lemmas are readily verified:
\begin{lemma}
If $u$ is symmetric then $A$ and $g$ are invariant in the sense that $\la_{\rho}^* A = A$ 
and $\la_{\rho}^*g = g$ (i.e., $\la_{\rho}$ is an isometry) for all $\rho\in G$. 
In particular, if $b\in L^2(\SM)$ is Codazzi then so is $\la_{\rho}^* b$.
\end{lemma}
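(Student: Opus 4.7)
My plan is to establish invariance of $g$ first, then equivariance of the unit normal $n$, deduce invariance of $A$, and finally invoke isometry invariance of the Codazzi operator. For the metric, I would note that $g = u^* g_{\RR^3}$ is the pullback by $u$ of the standard Euclidean metric on $\RR^3$, and that $L_\rho \in SO(3)$ satisfies $L_\rho^* g_{\RR^3} = g_{\RR^3}$. Functoriality of pullback then gives
\[
\la_\rho^* g = (u\circ\la_\rho)^* g_{\RR^3} = (L_\rho u)^* g_{\RR^3} = u^* L_\rho^* g_{\RR^3} = g,
\]
so $\la_\rho$ is a Riemannian isometry of $(M,g)$.

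For $A$, I would differentiate $u\circ\la_\rho = L_\rho u$ in local coordinates to obtain $\d_k u(\la_\rho x)\,(\D\la_\rho)^k_i(x) = L_\rho \d_i u(x)$. This shows that the tangent plane at $\la_\rho x$ is the $L_\rho$-image of the tangent plane at $x$, and since $\D\la_\rho \in SO(2)$ and $L_\rho \in SO(3)$ both preserve orientation, the unit normal must transform equivariantly, $n(\la_\rho x) = L_\rho n(x)$. Differentiating once more produces a correction term $\d_k u(\la_\rho x)\,\d_j(\D\la_\rho)^k_i$ which is tangential and therefore killed on pairing with $n(\la_\rho x) = L_\rho n(x)$; what remains is $h_{kl}(\la_\rho x)\,(\D\la_\rho)^k_i(\D\la_\rho)^l_j = h_{ij}(x)$, i.e., $\la_\rho^* A = A$.

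For the Codazzi claim, since $\la_\rho$ is an orientation-preserving Riemannian isometry, its pullback commutes with the Levi-Civita connection, with $J$ (defined purely from $g$ and the orientation), and therefore with the Codazzi operator $b\mapsto \div(Jb)$. In the $L^2$ weak formulation used in the paper, this reduces to the change-of-variables identity
\[
\int_M \al{J\la_\rho^* b, \Def Y} = \int_M \al{Jb, \Def((\la_{\rho^{-1}})_* Y)} = 0
\]
for all test fields $Y\in C_0^\infty(TM)$, using that $\la_\rho$ preserves the volume form. The main obstacle is really just bookkeeping; the one substantive ingredient is the equivariance $n\circ\la_\rho = L_\rho n$, since $A$ is not purely intrinsic to $(M,g)$. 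Everything else is either functoriality of pullback or a formal consequence of $\la_\rho$ being an isometry.
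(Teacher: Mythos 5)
Your verification is correct. The paper offers no proof (it declares the lemma ``readily verified''), and your argument supplies the natural one: invariance of $g$ by functoriality of pullback, equivariance $n\circ\la_\rho = L_\rho n$ from $\det\D\la_\rho = 1$ and $L_\rho\in SO(3)$ (this is the one genuinely extrinsic ingredient, as you note), invariance of $A$ by differentiating the symmetry identity once more and pairing with the normal, and preservation of the Codazzi condition by commuting $\la_\rho^*$ with $J$ (exactly the content of the paper's next lemma, $J(\la^*q)=\la^*(Jq)$) together with the change-of-variables identity $(\la_\rho^{-1})^*\Def Y = \Def\bigl((\la_\rho^{-1})_*Y\bigr)$, valid because $\la_\rho$ is an isometry.
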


\begin{lemma}\label{quinv2} 
Let $q\in L^2(T^*M\otimes T^*M)$ and let $\la : M \to M$ be an isometry.
Then we have
$
J(\la^*q) = \la^*(Jq).
$
In particular, if $q\in L^2(\SM)$ is invariant then so is $Jq$.
\end{lemma}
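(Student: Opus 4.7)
The plan is to reduce the statement to the fact that the almost complex structure $J$ on $TM$ commutes with pullback by orientation-preserving isometries, and then verify that this commutation propagates through the definition of $J$ on $\SM$ as the map $(Jq)(X,Y) = q(JX, JY)$. In the context of this paper, the relevant isometries $\la_\rho$ satisfy $D\la_\rho\in SO(2)$, so they automatically preserve orientation; I would note this tacitly.

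First I would establish the pointwise identity $d\la\circ J = J\circ d\la$ on $TM$. The cleanest argument is a uniqueness one: $J$ is characterised as the unique section of $\End(TM)$ with $J^2 = -\mathrm{Id}$, antisymmetric with respect to $g$, and such that $\{X, JX\}$ is positively oriented for every nonzero $X$. Since $\la$ is an orientation-preserving isometry, the conjugate $d\la^{-1}\circ J\circ d\la$ satisfies the same three properties (as one checks directly using $\la^* g = g$ and that $\la$ preserves the volume form), hence equals $J$. Alternatively, the coordinate formula for $J$ in the excerpt shows $J^{\a\b} = \sqrt{g}^{\,-1}\,\epsilon^{\a\b}$ (up to sign), which transforms as a tensor under orientation-preserving diffeomorphisms preserving $g$.

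Second, with the bundle-level commutation in hand, the main identity is a one-line computation:
\begin{align*}
\bigl(J(\la^* q)\bigr)(X,Y) &= (\la^* q)(JX, JY) = q\bigl(d\la\cdot JX,\ d\la\cdot JY\bigr) \\
&= q\bigl(J\cdot d\la X,\ J\cdot d\la Y\bigr) = (Jq)\bigl(d\la X,\ d\la Y\bigr) = \bigl(\la^*(Jq)\bigr)(X,Y).
\end{align*}
This works pointwise almost everywhere, which is sufficient given the $L^2$ regularity.

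Finally, the "in particular" assertion is immediate: if $\la_\rho^* q = q$ for every $\rho\in G$, then applying the identity just proved yields $\la_\rho^*(Jq) = J(\la_\rho^* q) = Jq$, so $Jq$ is invariant. There is no real obstacle here; the only subtlety is remembering that the commutation of $J$ with $d\la$ requires orientation preservation, which holds in the intended setting.
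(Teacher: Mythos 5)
Your proof is correct and supplies the verification the paper omits (the paper simply asserts these lemmas are ``readily verified''). The reduction to the tangent-bundle commutation $d\la\circ J = J\circ d\la$ followed by the one-line computation on $\SM$ is exactly the natural argument.

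One small refinement: the orientation-preservation caveat you flag is actually unnecessary for \emph{this} lemma, and raising it slightly weakens your proof relative to the statement (the lemma does not restrict $\la$ to be orientation-preserving). If $\la$ reverses orientation, then $d\la\circ J = -J\circ d\la$, but since $J$ is applied to \emph{both} arguments in $(Jq)(X,Y) = q(JX,JY)$, the two sign changes cancel:
\begin{align*}
\bigl(J(\la^* q)\bigr)(X,Y) &= q\bigl(d\la\, JX,\ d\la\, JY\bigr) = q\bigl(\pm J\, d\la X,\ \pm J\, d\la Y\bigr) = (Jq)(d\la X, d\la Y) = \bigl(\la^*(Jq)\bigr)(X,Y).
\end{align*}
Equivalently, this is visible from the coordinate formula $(Jq)^{\a\b} = |g|^{-1}(\cof q)_{\a\b}$, which involves $J$ quadratically and hence is orientation-independent. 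So the lemma holds for \emph{all} isometries, and you can drop the caveat entirely. For the intended application (Proposition \ref{quinv}, where $D\la_\rho\in SO(2)$) the distinction is of course moot.
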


\begin{lemma}\label{quinv1} 
Let $q$, $b\in L^2(\SM)$ and suppose that $q$ is invariant. Then 
$$
\al{q, (b)_G} = 
\int_G \al{q, b}\circ\la_{\rho}\ d\mu_G(\rho).
$$
\end{lemma}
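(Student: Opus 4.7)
The proof is a routine pointwise computation. At each $x \in M$, the plan is to start from the defining formula $(b)_G = \int_G \la_\rho^* b \, d\mu_G(\rho)$, pair both sides with $q(x)$, and simplify in three steps.

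First, by linearity of the fiberwise pairing $\al{q(x), \cdot}$ (and because $\mu_G(G) = 1$ while $\la_\rho^* b$ takes values in the finite-dimensional fiber $\SM_x$, so the integral over $G$ is just a vector-valued average), I would commute the pairing with the Haar integral to obtain
$$\al{q, (b)_G}(x) = \int_G \al{q(x), (\la_\rho^* b)(x)} \, d\mu_G(\rho).$$
Second, I would use the invariance hypothesis $q = \la_\rho^* q$ to rewrite each integrand as $\al{(\la_\rho^* q)(x), (\la_\rho^* b)(x)}$.

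The only genuinely geometric step is the third: invoking that $\la_\rho$ is an isometry of $(M, g)$ — the ambient assumption of this subsection, already used in Lemma \ref{quinv2} — so that the pullback commutes with metric contractions on $\SM$, i.e.
$$\al{\la_\rho^* \alpha, \la_\rho^* \beta} = \al{\alpha, \beta} \circ \la_\rho$$
for any sections $\alpha$, $\beta$ of $\SM$. Applied with $\alpha = q$ and $\beta = b$ inside the integral, this converts each integrand to $(\al{q, b} \circ \la_\rho)(x)$, which yields the claim.

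There is no real obstacle: the argument reduces entirely to linearity of the fiberwise inner product and to the naturality of the induced metric on $\SM$ under isometric pullbacks, which is precisely how the inner product on $\SM$ is built from $g$ in the first place. If any care is required, it is only in checking that the Haar integral defining $(b)_G$ can be evaluated fiberwise for a.e. $x$, which follows from the $L^2$ bounds on $b$ combined with boundedness of $\la_\rho^*$ on $L^2(\SM)$ (since $\la_\rho$ is an isometry) and finiteness of $\mu_G$.
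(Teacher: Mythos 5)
Your argument is correct and is essentially the paper's own proof, just read in the opposite direction: both rely precisely on the invariance $q=\la_\rho^*q$ together with the identity $\al{\la_\rho^*\alpha,\la_\rho^*\beta}=\al{\alpha,\beta}\circ\la_\rho$ for the isometry $\la_\rho$, followed by integrating over $G$. The only cosmetic difference is that you begin by commuting the pairing with the Haar integral and end with the isometry step, whereas the paper starts from $\al{q,b}\circ\la_\rho$ and integrates at the end.
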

\begin{proof}
Since $\la_{\rho}$ is an isometry, we have
$
\al{q, b}\circ\la_{\rho} = \al{\la_{\rho}^* q, \la_{\rho}^* b} = \al{q, \la_{\rho}^* b},
$
because $\la_{\rho}^*q = q$. Integration over $\rho$ yields the claim.
\end{proof}

\begin{proposition}\label{quinv} 
Assume that $u\in W^{2,2}_g(S)$ is symmetric and satisfies
$\int H \Tr b = 0$ for all {\em invariant} Codazzi tensors $b\in L^2(\SM)$ with $\al{JA, b} = 0$.
Then $\int H \Tr b = 0$ for all Codazzi tensors $b\in L^2(\SM)$ with $\al{JA, b} = 0$.
\end{proposition}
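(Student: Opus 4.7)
The plan is to take an arbitrary admissible test tensor $b$ and symmetrise it by averaging over the group action, producing an invariant Codazzi tensor $(b)_G$ for which the hypothesis applies. The proof then reduces to checking that the averaging procedure preserves (a) the constraint $\al{JA, b} = 0$, (b) the Codazzi property, and (c) the value of $\int_M H\Tr b$.

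First I would verify that $(b)_G\in L^2(\SM)$ is well-defined and invariant. Integrability is immediate from the fact that each $\la_\rho$ is an isometry (by the first lemma after Proposition \ref{quinv}), so $\|\la_\rho^*b\|_{L^2} = \|b\|_{L^2}$, and the normalised Haar integral is a convex combination. Invariance $\la_\sigma^*(b)_G = (b)_G$ follows from left-invariance of $\mu_G$. The Codazzi property is linear in $b$ and preserved by pullback under isometries (cf.\ the same lemma), so averaging preserves it.

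Next I would check $\al{JA, (b)_G} = 0$ almost everywhere. Since $u$ is symmetric, $A$ is invariant, and by Lemma \ref{quinv2} so is $JA$. Applying Lemma \ref{quinv1} with $q = JA$ gives
\[
\al{JA, (b)_G} = \int_G \al{JA, b}\circ\la_\rho\ d\mu_G(\rho) = 0,
\]
because $\al{JA, b} = 0$ almost everywhere and each $\la_\rho$ is a measure-preserving diffeomorphism. Hence $(b)_G$ is an invariant Codazzi tensor satisfying $\al{JA, (b)_G} = 0$, so the hypothesis yields $\int_M H\Tr (b)_G = 0$.

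The final step, which is the one that carries the real content, is the identity $\int_M H\Tr b = \int_M H\Tr (b)_G$. The mean curvature $H$ is invariant (as $A$ and $g$ are), and for any isometry $\la$ we have $\Tr(\la^*b) = (\Tr b)\circ\la$. Therefore
\[
\int_M H\Tr (b)_G\, d\mu_g = \int_G \int_M H\cdot (\Tr b)\circ\la_\rho\, d\mu_g\, d\mu_G(\rho),
\]
by Fubini. Using invariance $H = H\circ\la_\rho$ under each isometry $\la_\rho$, the inner integral equals $\int_M (H\Tr b)\circ\la_\rho\, d\mu_g = \int_M H\Tr b\, d\mu_g$ since $\la_\rho$ preserves $\mu_g$. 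Combining, $\int_M H\Tr b = \int_M H\Tr (b)_G = 0$, which is the claim. The only step requiring real care is the bookkeeping for (c); everything else is a routine verification that pullback commutes with the relevant linear operations.
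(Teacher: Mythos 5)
Your proof is correct and follows essentially the same route as the paper's: average $b$ over the group action, check that $(b)_G$ remains admissible using Lemmas \ref{quinv2} and \ref{quinv1}, apply the hypothesis, and then use Fubini together with the invariance of $H$ and the fact that each $\la_\rho$ is a measure-preserving isometry to conclude $\int_M H\Tr b = \int_M H\Tr(b)_G = 0$. The only difference is a slightly more explicit bookkeeping of the well-definedness and invariance of $(b)_G$, which the paper leaves implicit.
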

\begin{proof}
Let $b\in L^2(\SM)$ be a (possibly non-invariant) Codazzi tensor satisfying $\al{JA, b} = 0$.
Since $A$ is invariant, Lemma \ref{quinv2} implies that so is $JA$. 
Hence Lemma \ref{quinv1} implies that $\al{JA, (b)_G} = 0$. 
And Lemma \ref{quinv1} ensures that $(b)_G$ is still Codazzi. Hence the hypotheses imply
that
$
\int_M H\Tr (b)_G = 0.
$
On the other hand, since each $\la_{\rho}$ is an isometry,
\begin{align*}
\Tr(b)_G &= \int_G \Tr (\la_{\rho}^*b)\ d\mu_G(\rho) 
= \int_G (\Tr b)\circ\la_{\rho}\ d\mu_G(\rho).
\end{align*}
Hence by Fubini and since (by invariance of $A$) we have $H = H\circ\la_{\rho}$,
\begin{align*}
0 &= \int_M H\Tr (b)_G = 
\int_M H \left( \int_G (\Tr b)\circ\la_{\rho}\ d\mu_G(\rho) \right)
\\
&= \int_G \left( \int_M (H \Tr b)\circ\la_{\rho} \right)\ d\mu_G(\rho).
\end{align*}
Since $\la_{\rho}$ is an isometry,
the inner integral equals $\int_M H\Tr b$ for all $\rho\in G$. So 
indeed $\int_M H\Tr b = 0$.
\end{proof}

\subsection{Radially symmetric surfaces}

\begin{lemma}\label{symle2} 
Let $T > 0$ and let $R\in W^{1,\infty}(0, T)$ be positive on $[0, T]$,
and let
$$
g(t, \p) = (dt)^2 + R^2(t) (d\p)^2
$$
be a Riemannian metric on $U = [0, T]\times [0, 2\pi]$;
set $M = (U, g)$. Let $b_{ij}\in L^2(0, T)$ and let
$b\in L^2(\SM)$ be given by
$$
b(t, \p) = b_{11}(t) (dt)^2 + b_{22}(t) (d\p)^2 + 2b_{12}(t) dt\odot d\p.
$$
Then $b$ is Codazzi if and only if there exists a constant $C\in\R$ such that
\begin{align}
\label{symle2-1} 
\left( \frac{b_{22}}{R} \right)' &= R' b_{11}
\\
\label{symle2-2} 
b_{12} &= \frac{C}{R}.
\end{align}
\end{lemma}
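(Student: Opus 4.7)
The strategy is to pick coordinates $(x^1, x^2) = (t, \p)$ and reduce the coordinate form \eqref{lpc-1}--\eqref{lpc-2} of the Codazzi-Mainardi system to a pair of ODEs in $t$, exploiting that $b_{ij}$ is independent of $\p$.

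The first step is to compute the Christoffel symbols of $g = (dt)^2 + R^2(t)(d\p)^2$. A routine calculation gives as the only nonzero symbols $\Gamma_{22}^1 = -RR'$ and $\Gamma_{12}^2 = \Gamma_{21}^2 = R'/R$, both elements of $L^{\infty}(0,T)$ by the hypothesis that $R \in W^{1,\infty}$ is positive on $[0,T]$. Substituting into \eqref{lpc-1} and using $\d_2 b_{ij} \equiv 0$ collapses that equation to
$$
b_{12}' + (R'/R)\,b_{12} = 0,
$$
i.e.\ $(R b_{12})' = 0$ in distributions on $(0,T)$; this yields \eqref{symle2-2}. Substituting into \eqref{lpc-2} gives
$$
b_{22}' - (R'/R)\,b_{22} = R R'\,b_{11},
$$
which after division by $R$ is precisely \eqref{symle2-1}. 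All manipulations are reversible, so the same computation also gives the converse: once \eqref{symle2-1}--\eqref{symle2-2} are assumed, the identities \eqref{lpc-1}--\eqref{lpc-2} follow.

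The only delicate point, and not really an obstacle, is regularity: with $R \in W^{1,\infty}$ and $b_{ij} \in L^2(0,T)$, the Codazzi condition must be read distributionally through the weak form $\int_M \al{Jb, \Def Y} = 0$ for $Y \in C^{\infty}_0(TM)$. But $b$ and the Christoffel symbols depend only on $t$, so testing against vector fields of product type $\eta(t)\zeta(\p)\d_i$ with $\eta\in C^{\infty}_0(0,T)$ and $\zeta$ smooth separates variables and reduces everything to one-dimensional distributional ODEs in $t$ with $L^{\infty}$ coefficients, whose solutions are precisely those described above.
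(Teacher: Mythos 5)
Your proof is correct and follows essentially the same route as the paper: compute the Christoffel symbols of the warped metric, note $b_{ij}$ is independent of $\p$, substitute into the coordinate form of the Codazzi system to obtain two ODEs, and integrate the first to get $Rb_{12}=C$ and divide the second by $R$ to get $(b_{22}/R)'=R'b_{11}$. The added remark on distributional interpretation via separated test fields is a reasonable supplement, but is not part of the paper's argument and does not change the approach.
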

\begin{proof}
All Christoffel symbols of $g$ are zero, except
$$
\Gamma_{22}^1(t) = - R(t)R'(t)\mbox{ and }\Gamma_{12}^2(t) = (\log R)'(t);
$$
here we set $x_1 = t$ and $x_2 = \p$. Since, moreover, $b_{ij}$ are independent
of $\p$, the Codazzi equations read
\begin{align*}
b_{12}' + \left( \log R \right)' b_{12} &= 0
\\
b_{22}' - RR' b_{11} - \left( \log R \right)' b_{22} &= 0.
\end{align*}
The first of these equations is clearly equivalent to \eqref{symle2-2}.
Dividing the second equation by $R$, we see that it is equivalent to \eqref{symle2-1}.
\end{proof}

\begin{proposition}
Let $T > 0$ and let $R\in C^0([0, T])$ be positive and let $L\in C^0([0, T])$ be such that
\begin{equation}
\label{urad} 
u(x) = R(|x|) \frac{x}{|x|} + L(|x|)e_3
\end{equation} 
defines a map $u\in W^{2,2}(B_T)$. Then $u$ is formally stationary for $\W_g$,
where $g = u^* g_{\R^3}$.
\end{proposition}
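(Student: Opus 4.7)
The strategy is to reduce to invariant Codazzi tensors via the Principle of Symmetric Stationarity and then use the bending-field representation of Lemma \ref{claim2} to show that any such tensor automatically has vanishing trace.

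I invoke Proposition \ref{quinv} with the group $G = SO(2)$ acting by rotations on $B_T\subset\R^2$ and by rotations about the axis $\R e_3$ on $\R^3$; the immersion $u$ is $G$-symmetric by construction. Hence formal stationarity of $u$ reduces to verifying that $\int_M H\,\Tr b\,d\mu_g = 0$ for every $G$-invariant Codazzi tensor $b\in L^2(\SM)$ satisfying $\al{JA, b} = 0$.

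Fix such an invariant $b$. By Proposition \ref{proneu}\eqref{proneu-2} there exists an infinitesimal bending $\tau\in W^{2,1}(M, \R^3)$ with $n\cdot\Hess\tau = b$. A brief equivariance calculation, using $u\circ\la_\rho = L_\rho u$ and $n\circ\la_\rho = L_\rho n$, shows that $\tau_\rho := L_\rho^{-1}\tau\circ\la_\rho$ is again an infinitesimal bending of $u$ whose linearised second fundamental form equals $\la_\rho^* b$. Averaging $\tau$ over the compact group $G$ therefore yields an invariant bending $\bar\tau$ with $b_{\bar\tau} = (b)_G = b$. After replacing $\tau$ by $\bar\tau$, I may assume $\tau$ is itself invariant; then uniqueness in Lemma \ref{claim1} forces its bending field $\Omega = u_*\omega + \psi n$ and its components $\omega$, $\psi$ to be invariant as well.

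In geodesic polar coordinates $(t, \varphi)$ the metric takes the form $g = dt^2 + R(t)^2 d\varphi^2$, and $\la_\rho$ acts by $(t, \varphi)\mapsto(t, \varphi+\rho)$, so invariance forces $\omega = a(t)\partial_t + c(t)\partial_\varphi$ and $\psi = \psi(t)$. Since $S$ diagonalises in this frame with eigenvalues $-\theta'$ on $\partial_t$ and $-\sin\theta/R$ on $\partial_\varphi$ (where $\theta$ is the tangent angle of the generating curve, so that $\sin\theta = L'$), the $\partial_\varphi$-component of the constraint $d\psi = S\omega$ from Lemma \ref{claim2}\eqref{claim2-iii} reduces to $c(t)\sin\theta(t) = 0$ almost everywhere; as $\sin\theta = L'$ vanishes only at the isolated pole $t = 0$, this forces $c\equiv 0$, and hence $\omega = a(t)\partial_t$.

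The conclusion is then a short direct computation. Using $J\partial_t = R^{-1}\partial_\varphi$ one gets $J\omega = (a/R)\partial_\varphi$, and since $a$ does not depend on $\varphi$,
$$
\div(J\omega) \;=\; \frac{1}{R}\,\partial_\varphi\!\left(R\cdot\frac{a(t)}{R}\right) \;=\; 0.
$$
Taking the trace of the identity \eqref{claim2-b}, $B = D(J\omega) + \psi\,(J\circ S)$, and using $\Tr D(J\omega) = \div(J\omega)$ together with $\Tr(J\circ S) = 0$ pointwise (because $J$ is $g$-skew while $S$ is $g$-symmetric) yields $\Tr b = 0$ almost everywhere on $M$. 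Therefore $\int_M H\,\Tr b\,d\mu_g = 0$, and Theorem \ref{pr:01} establishes that $u$ is formally stationary.

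The main technical obstacle I expect is the averaging step: one must verify carefully that the $G$-average of a $W^{2,1}$-regular infinitesimal bending remains an infinitesimal bending whose linearised second fundamental form is precisely $(b)_G$. This is a standard equivariance computation relying on $u\circ\la_\rho = L_\rho u$, $n\circ\la_\rho = L_\rho n$, and on the fact that $d\la_\rho$ acts as the identity on the coordinate vector fields $(\partial_t, \partial_\varphi)$; once these are sorted out, the remainder is a brief calculation in the rotationally adapted frame.
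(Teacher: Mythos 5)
Your overall route is genuinely different from the paper's. The paper works directly with the invariant Codazzi tensor $b$ in polar coordinates: it writes down the Codazzi ODEs (Lemma \ref{symle2}), combines them with the linearised Gauss relation $\al{JA, b} = 0$ to obtain $b_{22}L' = C_1 R$, kills $C_1$ using boundedness of $b_{22}/R$ and $L'(0) = 0$, and concludes $\al{A, b} = 0$ pointwise. You instead pass through the bending field via Proposition \ref{proneu} and the identity \eqref{claim2-b}, reducing the problem to $\div(J\omega) = 0$. The averaging of $\tau$ that you flag as the main technical issue does in fact go through (a short equivariance computation shows $\tau_\rho = L_\rho^{-1}\tau\circ\la_\rho$ is a bending with linearised second fundamental form $\la_\rho^* b$), and the invariance of $\omega$ and $\psi$ follows from uniqueness of $\Omega$ and equivariance of $u_*$, $n$.

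However, there is a genuine gap at the step where you pass from $c(t)\,L'(t) = 0$ a.e. to $c\equiv 0$. You justify this by asserting that $\sin\theta = L'$ vanishes only at the isolated pole $t = 0$, but nothing in the hypotheses guarantees this: $L$ is an arbitrary continuous function with $u\in W^{2,2}$, and $L'$ can vanish on sets of positive measure. Indeed, the pathological examples built at the end of this very section use exactly such profiles ($L$ a sum of bumps, so that $L'\equiv 0$ on many intervals). On the interior of $\{L' = 0\}$ the constraint $d\psi = S\omega$ gives no information about $c$, and a nonzero $c$ there makes $\div(J\omega) = -\tfrac{1}{R}(R^2 c)'$ nonzero, so your conclusion $\Tr b = 0$ a.e.\ is false in general.

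The argument can be repaired with one additional observation you never make: on $\{L' = 0\}$ one also has $L'' = 0$ a.e., hence $h_{11} = R'L'' - L'R'' = 0$ and $h_{22} = RL' = 0$ there, so $H = 0$ a.e.\ on $\{L' = 0\}$. Combined with your computation that $\div(J\omega) = 0$ a.e.\ on $\{L'\neq 0\}$, this yields $H\,\Tr b = 0$ a.e., which is what Theorem \ref{pr:01}\eqref{pr:01-3} actually needs. As it stands, your proof claims the stronger (and in general false) statement $\Tr b = 0$, so you must replace the isolated-zeros assertion with the $H = 0$ observation on $\{L' = 0\}$.
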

\begin{proof}
After possibly reparametrising the curve $t\mapsto (R(t), L(t))$, we may assume that
$(R')^2 + (L')^2 = 1$. Denote by $M$ the Riemannian manifold $(B_T, g)$. Clearly $u$ satisfies
$u(Qx) = Qu(x)$ for all $Q\in SO(2)$ and all $x\in B_T$; on the right-hand side
we regard $SO(2)$ as a subset of $SO(3)$ in the obvious way.
\\
Hence by Proposition \ref{quinv} we must prove that $\int_M \al{A, b} = 0$
for all $SO(2)$-invariant Codazzi tensors $b\in L^2(\SM)$ with $\al{JA, b} = 0$.
Let $b$ be such a tensor.
\\
We introduce radial coordinates $(t, \p)$ via $x = te^{i\p}$; we identify $\mathbb{C}$ with
$\R^2\times\{0\}$. In these coordinates we have (with the usual abuse of notation)
$$
u(t, \p) = R(t)e^{i\p} + L(t) e_3.
$$
Since $(R')^2 + (L')^2 = 1$, we see that
$$
g(t, \p) = (dt)^2 + R^2(t) (d\p)^2.
$$
Invariance implies that there
exist functions $b_{ij}\in L^2(0, T)$ such that
$$
b(t, \p) = b_{11}(t) (dt)^2 + b_{22}(t) (d\p)^2 + b_{12}(t) (dt\odot d\p).
$$
Since $b$ is Codazzi, Lemma \ref{symle2} shows that $b_{22}\in W^{1,2}$ and
\begin{align}\label{sympro-1} 
\left( \frac{b_{22}}{R} \right)' = R' b_{11}.
\end{align}
On the other hand, 
\begin{align*}
h_{11} &= R' L'' - L' R''
\\
h_{12} &= 0
\\
h_{22} &= R L'.
\end{align*}
Hence $\al{JA, b}$ means that
\begin{equation}\label{sympro-3} 
b_{11} R L' = (L'R'' - R' L'') b_{22}.
\end{equation} 
We therefore 
deduce from \eqref{sympro-1} that there exists a constant $C_1\in\R$ such that
\begin{equation}
\label{sympro-2}
b_{22}L' = C_1 R.
\end{equation}
However, since $R'\in L^{\infty}$ and $b_{11}\in L^2$, equation \eqref{sympro-1}
shows, in particular, that $R^{-1}b_{22}$ is bounded. On the other hand, $u\in W^{2,2}$
implies that $L'(0) = 0$. Hence \eqref{sympro-2} implies that $b_{22} = 0$.
But then \eqref{sympro-1} shows that $b_{11}R' = 0$, and \eqref{sympro-3} shows
that $b_{11}L' = 0$. Hence $b_{11} = 0$. Since both $g$ and $h$
are diagonal, this shows that $\al{A, b} = 0$.
\end{proof}

\begin{lemma}
\label{symle3}
Let $\t u$ and $u$ be of the form \eqref{urad}.
Then $\t u$ and $u$ are isometric if and only if (with the obvious
notation) $\t R = R$ and $|\t L'| = |L'|$ almost everywhere on $(0, T)$.
\end{lemma}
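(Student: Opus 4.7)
The plan is to compute explicitly the pullback metrics $u^*g_{\R^3}$ and $\tilde u^*g_{\R^3}$ on $B_T$ in polar coordinates and then match coefficients. Writing $x=te^{i\p}$ and substituting into \eqref{urad} yields
$$
u(t,\p) = R(t)e^{i\p} + L(t)e_3,
$$
so that $\d_t u = R'(t)e^{i\p} + L'(t)e_3$ and $\d_\p u = iR(t)e^{i\p}$. These two vectors are orthogonal in $\R^3$ with squared lengths $(R')^2+(L')^2$ and $R^2$ respectively, giving
$$
u^*g_{\R^3} = \big((R')^2+(L')^2\big)\,dt^2 + R^2\,d\p^2,
$$
and the analogous formula holds for $\tilde u$ with $(R,L)$ replaced by $(\tilde R,\tilde L)$.

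The two immersions are isometric precisely when these two sections of $\SM$ agree on $B_T$. Since each is diagonal in the coframe $(dt,d\p)$ with coefficients depending only on $t$, equality reduces to the two pointwise identities
$$
R(t)^2 = \tilde R(t)^2, \qquad (R'(t))^2+(L'(t))^2 = (\tilde R'(t))^2+(\tilde L'(t))^2
$$
for almost every $t\in(0,T)$. Positivity of $R$ and $\tilde R$ on $[0,T]$ promotes the first to $R\equiv\tilde R$ pointwise; differentiating (legitimate since $R,\tilde R\in W^{1,\infty}$) gives $R'=\tilde R'$ a.e., and the second identity then collapses to $(L')^2=(\tilde L')^2$, i.e.\ $|L'|=|\tilde L'|$ a.e. The converse implication is immediate by direct substitution into the coefficient formula.

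There is essentially no obstacle — the proof is a direct computation followed by term-by-term comparison. The only mild technicalities are (i) that the polar coordinates $(t,\p)$ degenerate at the origin, which is immaterial because the assertion is an a.e.\ statement on $(0,T)$, and (ii) that one must use the positivity and Sobolev regularity of $R$ to pass from $R^2=\tilde R^2$ to $R=\tilde R$ and then to $R'=\tilde R'$ a.e.
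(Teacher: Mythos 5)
Your proof is correct and follows essentially the same route as the paper: compute the induced metric in polar coordinates, match the diagonal coefficients, use positivity of $R,\tilde R$ to get $R=\tilde R$ hence $R'=\tilde R'$ a.e., and conclude $|L'|=|\tilde L'|$.
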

\begin{proof}
The metric induced by $u$ is
$$
\left( (R')^2 + (L')^2 \right) (dt)^2 + R^2 (d\p)^2,
$$
and similarly for $\t u$. So keeping in mind that $R$, $\t R$ are nonnegative,
we see that $u$ and $\t u$ are isometric if and only if
$$
\t R = R \mbox{ and }|\t L'| = |L'|\mbox{ almost everywhere,}
$$
because the former implies that $\t R' = R'$ almost everywhere.
\end{proof}

\begin{corollary}\label{symcor1}
Let $\t u$ and $u$ be of the form \eqref{urad}, and assume that they are isometric. Then 
the modulus of their mean curvatures agrees almost everywhere. In particular,
$\W_g(\t u) = \W_g(u)$.
\end{corollary}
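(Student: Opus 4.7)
The plan is to reduce everything to the pointwise identity $|\t A|^2 = |A|^2$ and then use the Gauss equation $4H^2 = |A|^2 + 2K$, noting that $K$ depends only on the (common) metric. Equality of the $\W_g$-values then follows by integration.

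First, by Lemma \ref{symle3} we have $\t R = R$ and $|\t L'| = |L'|$ almost everywhere on $(0, T)$; in particular $\t R' = R'$ and $\t R'' = R''$. As in the proof of the preceding proposition I shall assume arc-length parametrisation, $(R')^2 + (L')^2 = 1$, whence $R'R'' + L'L'' = 0$ almost everywhere. The coefficients of the second fundamental form, already computed above, are $h_{11} = R'L'' - L'R''$, $h_{12} = 0$, $h_{22} = RL'$, so with $g^{11} = 1$, $g^{22} = R^{-2}$ one gets
\begin{equation*}
|A|^2 = h_{11}^2 + \frac{h_{22}^2}{R^4} = (R'L'' - L'R'')^2 + \frac{(L')^2}{R^2}.
\end{equation*}
The identity $R'R'' + L'L'' = 0$ yields, by a one-line expansion,
$(R'L'' - L'R'')^2 + (R'R'' + L'L'')^2 = ((R')^2 + (L')^2)((L'')^2 + (R'')^2)$, hence
$h_{11}^2 = (L'')^2 + (R'')^2$ almost everywhere, and similarly for $\t h_{11}^2$.

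The main obstacle is to pass from $|\t L'| = |L'|$ to $|\t L''| = |L''|$ almost everywhere, since the sign in $\t L' = \pm L'$ may switch on the set $\{L' = 0\}$. To handle this, I would square the identity to get $(\t L')^2 = (L')^2$ as an equality of $W^{1,2}$ functions, and differentiate to obtain $\t L' \t L'' = L' L''$ almost everywhere. On the set $\{L' \neq 0\}$ this gives $\t L'' = (L' / \t L')\, L'' = \pm L''$, so $|\t L''| = |L''|$ there. On the set $\{L' = 0\}$ one has also $\t L' = 0$, and by the standard Sobolev fact that a $W^{1,1}$ function has vanishing weak derivative almost everywhere on its zero set, both $L''$ and $\t L''$ vanish almost everywhere on $\{L' = 0\}$. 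Combining these cases gives $|\t L''| = |L''|$ almost everywhere.

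With this in hand, $\t h_{11}^2 = (\t L'')^2 + (\t R'')^2 = (L'')^2 + (R'')^2 = h_{11}^2$ and $\t h_{22}^2 = \t R^2(\t L')^2 = R^2(L')^2 = h_{22}^2$ almost everywhere, so $|\t A|^2 = |A|^2$ pointwise almost everywhere. Since $\t u$ and $u$ induce the same metric, the Gauss curvature $K$ is the same for both, and \eqref{hess-2} then forces $4 \t H^2 = |\t A|^2 + 2K = |A|^2 + 2K = 4 H^2$, i.e.\ $|\t H| = |H|$ almost everywhere. Finally, integrating $|\t A|^2 = |A|^2$ against the common area measure $\mu_g$ yields $\W_g(\t u) = \W_g(u)$.
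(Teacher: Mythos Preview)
Your proof is correct and shares with the paper the one genuinely nontrivial step: showing that $|\t L''| = |L''|$ almost everywhere by splitting into $\{L'\neq 0\}$ and $\{L'=0\}$ and using that a Sobolev function has vanishing derivative almost everywhere on its zero set. The route differs, however. The paper works directly with $4H^2 = \kappa^2 + (L')^2/R^2 + 2\kappa L'/R$ (where $\kappa = R'L'' - L'R''$) and handles the cross term via the identity $\kappa L' = \tfrac12 R'\big((L')^2\big)' - (L')^2 R''$, which visibly depends only on $(L')^2$ and $R$; from this it deduces $|\kappa| = |\t\kappa|$ and hence $|H| = |\t H|$. You instead compare $|A|^2$, using the Lagrange-type identity to rewrite $\kappa^2 = (R'')^2 + (L'')^2$, and then recover $|H| = |\t H|$ from the Gauss equation \eqref{hess-2}. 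Your detour through $|A|^2$ is arguably cleaner, since the Lagrange identity kills the cross term in one stroke and the intrinsic nature of $K$ does the rest; the paper's version stays closer to $H$ throughout and avoids invoking Gauss. One small imprecision: calling $(\t L')^2 = (L')^2$ an equality ``of $W^{1,2}$ functions'' is more than you need or can guarantee near the origin; $W^{1,1}_{loc}$ (which follows from $L'\in L^\infty$ and $L''\in L^1_{loc}$) suffices for the differentiation step.
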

\begin{proof}
As before, we assume without loss of generality that $(L')^2 + (R')^2 = 1$.
Set $\kappa = R'L'' - L'R''$ and denote the corresponding quantities for $\t u$
by a tilde.
We compute that $2H = \kappa + L'/R$. Hence
\begin{equation}
\label{symcor1-4} 
4H^2 = \kappa^2 + \frac{(L')^2}{R^2} + \frac{2\kappa L'}{R}.
\end{equation}
By Lemma \ref{symle3}, the second term is clearly the same for $\t u$. 
Regarding the last term, we compute
\begin{equation}
\label{symcor1-5} 
\kappa L' = \frac{1}{2} R' \left( (L')^2\right)' - (L')^2 R''.
\end{equation} 
Hence Lemma \ref{symle3} shows that the last term in \eqref{symcor1-4} also is the same for $\t u$.
By \eqref{symcor1-5} so is $|\kappa| |L'|$. Hence $|\kappa| = |\t\kappa|$ almost everywhere
on $\{L'\neq 0\} = \{\t L' \neq 0\}$. But on $\{L' = 0\} = \{\t L' = 0\}$
we have $L'' = \t L'' = 0$ almost everywhere, hence $\t\kappa = \kappa = 0$.
We conclude that $|\t\kappa| = |\kappa|$ almost everywhere. This shows that
$|H| = |\t H|$ almost everywhere.
\end{proof}

\subsection*{A pathological example}

In this example we construct a class of Riemannian metrics $g\in C^{\infty}(\o B_1)$
such that the functional $\W_g$, with $g = u^* g_{\R^3}$,
admits infinitely many stationary points. This is analogous to the examples in \cite{H-AnotherRemark}.
\\
Let $\eta\in C^{\infty}(\R)$ be nonnegative and supported in 
$(-\frac{1}{2}, \frac{1}{2})$ (but not identically zero).
Let $R\in (0, 1]$ and let
$(t_n)_{n = 1}^{\infty}\subset (0, R)$ be a strictly increasing sequence with 
$t_n\uparrow R$. Set $t_0 = 0$ and define $L:[0,1) \to \R$ by setting
$$
L(t) = \sum_{n = 0}^{\infty} (t_{n+1} - t_n)^n\ \eta\left( \frac{2t - t_n - t_{n+1}}{2(t_{n+1} - t_n)}\right).
$$
Since $L\in W^{2,\infty}(0, 1)$ vanishes near zero, we see that 
$u : B_1\to\R^3$ given by $u(x) = x + L(|x|)e_3$ belongs to $W^{2,\infty}$; 
in fact $u\in C^{\infty}(\o B_1)$,
because $L\in C^{\infty}([0, 1])$.
For each $n = 1, 2, 3, ...$ define $u_n : [0, 1)\to\R$ by
$$
L_n(t) = 
\begin{cases}
-L(t) &\mbox{ if }t\in (t_n, t_{n+1})
\\
L(t) &\mbox{ otherwise,}
\end{cases}
$$
and define $u_n\in W^{2,\infty}(B_1)$ by setting $u_n(x) = x + L_n(|x|)e_3$.
Since $|L_n'| = |L'|$ everywhere, Lemma \ref{symle3} shows that 
all $u_n$ are isometric to $u$. 
Clearly, the $u_n$ are pairwise distinct and of the form \eqref{urad}.
Proposition \ref{quinv} shows that each $u_n$ is (even formally) stationary for
$\W_g$, with $g = u^* g_{\R^3}$.

\vspace{1cm}

\def\cprime{$'$}

\end{document}